\documentclass[letterpaper, 10 pt, conference]{ieeeconf}  

\IEEEoverridecommandlockouts                              

\usepackage{graphicx} 
\usepackage{amsmath} 
\usepackage{amssymb}  
\usepackage{mathrsfs}
\usepackage{xcolor}
\usepackage{physics}
\newtheorem{proposition}{Proposition}
\newtheorem{corollary}[proposition]{Corollary}

\newtheorem{assumption}{Assumption}
\newtheorem{definition}{Definition}

\title{\LARGE \bf Neural Robust Control on Lie Groups Using Contraction Methods (Extended Version)}

\author{Yi Lok Lo$^{1}$, Longhao Qian$^{1}$ and Hugh H.T. Liu$^{1}$
\thanks{$^{1}$The authors are with the Institute for Aerospace Studies, University of Toronto, 4925 Dufferin St, North York, ON M3H 5T5, Canada 
{\tt\small enoch.lo@mail.utoronto.ca, longhao.qian@mail.utoronto.ca, hugh.liu@utoronto.ca}}%
\thanks{This work has been submitted to the IEEE for possible publication. 
Copyright may be transferred without notice, after which this version may no longer be accessible.}
}

\begin{document}

\maketitle
\thispagestyle{empty}
\pagestyle{empty}

\begin{abstract}

In this paper, we propose a learning framework for synthesizing a robust controller for dynamical systems evolving on a Lie group. A robust control contraction metric (RCCM) and a neural feedback controller are jointly trained to enforce contraction conditions on the Lie group manifold. Sufficient conditions are derived for the existence of such an RCCM and neural controller, ensuring that the geometric constraints imposed by the manifold structure are respected while establishing a disturbance-dependent tube that bounds the output trajectories. As a case study, a feedback controller for a quadrotor is designed using the proposed framework. Its performance is evaluated using numerical simulations and compared with a geometric controller.

\end{abstract}

\begin{keywords}
Contraction analysis, learning-based control, robust control, nonlinear systems, Lie groups
\end{keywords}


\section{INTRODUCTION}

Contraction analysis provides a powerful framework for studying the incremental stability of nonlinear dynamical systems, establishing global stability properties through differential dynamics \cite{Lohmiller1998}. Building on this foundation, control contraction metrics (CCMs) provide constructive conditions for synthesizing stabilizing feedback controllers for nonlinear systems \cite{Manchester2017}. To address robustness, robust control contraction metrics (RCCMs) extend the CCM framework to systems subject to disturbances. Prior works have established bounds on $\mathcal{L}_\infty$ and $\mathcal{L}_2$ gains between disturbances and trajectory deviations \cite{Zhao2022,Manchester2018}. Furthermore, disturbance estimation techniques have been integrated with contraction-based methods to learn and compensate for model uncertainties \cite{Zhao2024}.

In parallel, rapid advances in machine learning have enabled learning-based approaches for synthesizing controllers for complex nonlinear systems using reinforcement learning \cite{Brunke2022} and deep neural networks \cite{Dawson2023}. In particular, several learning-based CCM frameworks have been developed, offering interpretable and certifiable guarantees \cite{Tsukamoto2021, Sun2021, Tsukamoto2020}.

Existing CCM and RCCM formulations are primarily developed for systems in Euclidean spaces, yet many robotic applications contain states evolving on manifolds, such as attitudes of aerial vehicles or joint configurations of manipulators. A common workaround uses local coordinates, such as Euler angles \cite{Sun2021, Singh2023}, but introduces singularities and distortions that can degrade control performance. In contrast, Lie group representations provide globally valid models. Recent works have extended contraction control to submanifolds embedded in Euclidean spaces \cite{Wu2024, Vang2020}. In this paper, we present an approach towards certified robust trajectory tracking for systems evolving on Lie groups using a learning framework. The contributions of this paper are summarized as follows:

\begin{enumerate}
    \item We formulate conditions for RCCMs on Lie groups such that the closed-loop system exhibits input-output stability with a bounded $\mathcal{L}$-infinity gain. Conditions for the existence of RCCM controllers are provided. 
    \item We develop a learning-based framework that jointly synthesizes a neural feedback controller and an RCCM on Lie groups. A quadrotor is used as a case study, and numerical simulations demonstrate robust trajectory tracking capabilities of the learned controller.
\end{enumerate}

\textit{Notations:} Given a manifold $\mathcal X$, $T\mathcal X$ and $T_{\boldsymbol{x}}\mathcal X$ represent the tangent bundle and tangent space at $\boldsymbol{x}\in\mathcal{X}$ respectively. $\boldsymbol{I}_n$ and $\boldsymbol{0}$ denote an $n\times n$ identity matrix and a zero matrix of compatible dimensions respectively. For symmetric matrices $\boldsymbol{P}$ and $\boldsymbol{Q}$, $\boldsymbol{P}\succ \boldsymbol{Q}$ ($\boldsymbol{P}\succeq \boldsymbol{Q}$) means $\boldsymbol{P} - \boldsymbol{Q}$ is positive definite (semi-definite). $\langle \boldsymbol{A}\rangle$ denotes $\boldsymbol{A}+\boldsymbol{A}^\top$ for a square matrix $\boldsymbol{A}$, and $\operatorname{vec}(
\boldsymbol{B})$ denotes the row-major vectorization of a matrix $\boldsymbol{B}$. $\operatorname{diag}(\cdot)$ denotes the (block-)diagonal matrix operator. Let $\boldsymbol{\phi}^{\wedge}$ be the skew-symmetric operator for $\boldsymbol{\phi} \in \mathbb{R}^{3}$ such that ${\boldsymbol{\phi}^{\wedge}}^\top=-\boldsymbol{\phi}^{\wedge}$. Let $\boldsymbol{S}^{\vee}$ be the vee-map for a skew-symmetric matrix $\boldsymbol{S}^\top = -\boldsymbol{S} \in \mathbb{R}^{3 \times 3}$. For a smooth function $\boldsymbol{Q}(\boldsymbol{x})$, $\partial_{\boldsymbol{f}}\boldsymbol{Q}(\boldsymbol{x})=\sum_j\frac{\partial \boldsymbol{Q}}{\partial x_j}f_j$ denotes its directional derivative along vector field $\boldsymbol{f}$. The Euclidean 2-norm is denoted as $\|\cdot\|$, and the $\mathcal{L}_\infty$ norm of a function $\boldsymbol{x}(t)$ is defined as $\|\boldsymbol{x}(t)\|_{\mathcal{L}_{
\infty}} = \sup_{t\geq0} \| \boldsymbol{x}(t)\|$. The truncated $\mathcal{L}_\infty$ norm is defined as $\|\boldsymbol{x}(t)\|_{\mathcal{L}_{
\infty}^{[0, T]}} = \sup_{0\leq t\leq T} \| \boldsymbol{x}(t)\|$. For a positive integer $m$, denote the set $\mathbb Z_m := \{1, ..., m\}$. The vectors $\boldsymbol{e}_i$, for $i\in \mathbb Z_3$, represent the Euclidean basis in $\mathbb R^3$. 

\section{Problem statement and Preliminaries}

Consider the following nonlinear control-affine system
\begin{equation}\label{eqn: sys_ctrl_affine_dist}
\begin{aligned}
    &\dot{\boldsymbol{x}} = \boldsymbol{f}(\boldsymbol{x}) + \boldsymbol{B}(\boldsymbol{x})\boldsymbol{u} + \boldsymbol{B}_w(\boldsymbol{x})\boldsymbol{w}\\
    &\boldsymbol{z} = \boldsymbol{g}(\boldsymbol{x}, \boldsymbol{u})
\end{aligned}
\end{equation}
with the state $\boldsymbol{x}\in\mathcal X\subseteq \mathbb R^n$, control $\boldsymbol{u}\in\mathbb R^m$, disturbance $\boldsymbol{w}\in\mathbb R^p$ and output $\boldsymbol{z}\in \mathbb R^l$. $\boldsymbol{f}$, $\boldsymbol{B}$ and $\boldsymbol{B}_w$
are continuously differentiable vector/matrix functions with compatible dimensions, and matrices $\boldsymbol{B}(\boldsymbol{x})$ and $\boldsymbol{B}_w(\boldsymbol{x})$ have full column rank. We use $\boldsymbol{b}_i$ and $\boldsymbol{b}_{w,j}$ to denote the $i^{th}$ column of $\boldsymbol{B}(\boldsymbol{x})$ and $j^{th}$ column of $\boldsymbol{B}_w(\boldsymbol{x})$ respectively. The state space $\mathcal{X}$ is assumed to be a Lie group, which can be viewed as a submanifold embedded in a higher-dimensional Euclidean space (e.g. $SO(n) = \{\boldsymbol{R}\in \mathbb R^{n\times n} \mid \boldsymbol{R}^\top\boldsymbol{R}=\boldsymbol{I}_n \,, \det\boldsymbol{R} = 1\}$). The state space of the system can be represented as
\begin{equation}
    \mathcal{X} = \{\boldsymbol{x}\in \mathbb R^{n} \mid \boldsymbol{h}(\boldsymbol{x})=\boldsymbol{0} \},
\end{equation}
where $\boldsymbol{h}:\mathbb R^n \to \mathbb R^{n-q}$ is a smooth constraint function. Hence, $\mathcal X$ can be regarded as a $q$-dimensional manifold embedded in $\mathbb R^n$. To ensure trajectories remain on $\mathcal{X}$, it is necessary and sufficient that the vector fields $\boldsymbol{f}$, $\boldsymbol{b}_i\,(i\in \mathbb{Z}_m)$, $\boldsymbol{b}_{w,j}\,(j\in \mathbb{Z}_p)$ be tangent to the manifold $\mathcal{X}$. Equivalently, they must satisfy the transversality conditions. For all $\boldsymbol{x}\in \mathcal{X}$,
\begin{equation}\label{eqn: cond_transversality}
    \partial_{\boldsymbol{f}}\boldsymbol{h}(\boldsymbol{x}) = \boldsymbol{0}, \;\partial_{\boldsymbol{b}_i}\boldsymbol{h}(\boldsymbol{x}) = \boldsymbol{0}, \;\partial_{\boldsymbol{b}_{w,j}}\boldsymbol{h}(\boldsymbol{x}) = \boldsymbol{0}.
\end{equation}

It is well-known that any Lie group is parallelizable, admitting a set of global left-invariant vector fields that trivializes its tangent bundle \cite{Lee2003}. Therefore, there exist global vector fields $\{\boldsymbol{s}_1, ..., \boldsymbol{s}_q\}$ that form a basis of $T_{\boldsymbol{x}}\mathcal{X}$ at each $\boldsymbol{x}\in\mathcal{X}$. For Lie groups that can be characterized in the general form $G_c\times G_m\times \mathbb R^{a}$, where $G_c$ is a compact Lie group and $G_m$ is a matrix Lie group, a smooth function $\boldsymbol{S}(\boldsymbol{x}):=\begin{bmatrix}
    \boldsymbol{s}_1, ..., \boldsymbol{s}_q
\end{bmatrix} \in \mathbb R^{n\times q}$ can be defined such that 
\begin{equation}\label{eqn: extrinsic_to_intrinsic}
    \forall\boldsymbol{x}\in\mathcal X,\,\,T_{\boldsymbol{x}}\mathcal X=\{ \boldsymbol{S}(\boldsymbol{x})\boldsymbol{v}\mid\boldsymbol{v}\in \mathbb{R}^q\} 
\end{equation}
That is, any tangent vector at $\boldsymbol{x}$ with extrinsic representation (written in coordinates of ambient Euclidean space) can be obtained from linear combinations of tangent basis vector $\boldsymbol{S}(\boldsymbol{x})\boldsymbol{v}$, where $\boldsymbol{v}$ is the intrinsic representation of said tangent vector.

\begin{assumption}\label{assum: sys_lie}
    The system defined in \eqref{eqn: sys_ctrl_affine_dist} evolves on the Lie group $\mathcal X$ in the general form $G_c\times G_m\times \mathbb R^{a}$ such that the transversality conditions in \eqref{eqn: cond_transversality} are satisfied at all times. Moreover, the tangent bundle $T\mathcal X$ is spanned by smooth vector fields $\boldsymbol{S}(\boldsymbol{x})$,
    and $\boldsymbol{S}^\top\boldsymbol{S}$ is uniformly bounded.
\end{assumption}

For systems that satisfy Assumption \ref{assum: sys_lie}, given a nominal solution $(\boldsymbol{x}^*, \boldsymbol{u}^*, \boldsymbol{w}^*, \boldsymbol{z}^*)$ that satisfies \eqref{eqn: sys_ctrl_affine_dist}, this paper focuses on designing a feedback controller in the form 
\begin{equation}\label{eqn: fb_ctrl_format}
    \boldsymbol{u} = \boldsymbol{k}(\boldsymbol{x}, \boldsymbol{x}^*) + \boldsymbol{u}^*, 
\end{equation}
where $\boldsymbol{k}:\mathcal{X}\times\mathcal{X}\to\mathbb{R}^m$ and $\boldsymbol{k}(\boldsymbol{x}, \boldsymbol{x})  =\boldsymbol{0}$. The goal is to minimize the universal $\mathcal{L}_\infty$ gain from disturbance deviation $\boldsymbol{w}-\boldsymbol{w}^*$ to output deviation $\boldsymbol{z}-\boldsymbol{z}^*$. Note that $\boldsymbol{w}^*$ is the nominal disturbance vector including the special case $\boldsymbol{w}^*(t)=\boldsymbol{0}$. In practice, $\boldsymbol{w}^*(t)$ can be disturbance estimates from some disturbance observer. We define the universal $\mathcal{L}_\infty$ gain formally in Definition \ref{defn: universal_L_inf_gain}. 

\begin{definition}\label{defn: universal_L_inf_gain}
A control system \eqref{eqn: sys_ctrl_affine_dist} with feedback controller \eqref{eqn: fb_ctrl_format} achieves a $\mathcal{L}_\infty$ gain bound of $\alpha$, if for any nominal trajectory $\boldsymbol{x}^*,\boldsymbol{w}^*,\boldsymbol{z}^*$ satisfying \eqref{eqn: sys_ctrl_affine_dist}, any initial condition $\boldsymbol{x}(0)$, any $\boldsymbol{w}$ such that $\|\boldsymbol{w} \|_{\mathcal{L}_{\infty}}<\infty$ and for any $T>0$, the condition
\begin{equation}
    \|\boldsymbol{z}-\boldsymbol{z}^*\|^2_{\mathcal{L}_\infty ^{[0,T]}} \leq \alpha^2\|\boldsymbol{w}-\boldsymbol{w}^*\|^2_{\mathcal{L}_\infty ^{[0,T]}} + \beta(\boldsymbol{x}(0), \boldsymbol{x}^*(0))
\end{equation}
holds for a function $\beta(\boldsymbol{x}_1, \boldsymbol{x}_2) \geq 0$ and $\beta(\boldsymbol{x}, \boldsymbol{x}) = 0$.
\end{definition}

In the case where $\boldsymbol{z}$ lives in the Lie group $\mathcal{Z}\subseteq\mathcal{X}\times\mathbb R^m$, output deviations can be measured using Riemannian distances under some Riemannian metrics on $\mathcal{Z}$.

\subsection{Preliminaries}
The CCM approach is a framework for designing feedback controllers for control-affine systems that provide incremental stability guarantees through analysis of the associated differential dynamics \cite{Manchester2017}. When disturbances are considered, the RCCM framework proposed in \cite{Manchester2018} further analyzes the differential dynamics associated with \eqref{eqn: sys_ctrl_affine_dist}, which is given by
\begin{equation}\label{eqn: sys_differential_dist}
\begin{aligned}
    &\delta\dot{\boldsymbol{x}} = \boldsymbol{A}(\boldsymbol{x}, \boldsymbol{u}, \boldsymbol{w})\delta\boldsymbol{x} + \boldsymbol{B}(\boldsymbol{x})\delta\boldsymbol{u} + \boldsymbol{B}_w(\boldsymbol{x})\delta\boldsymbol{w} \\
    &\delta\boldsymbol{z} = \boldsymbol{C}(\boldsymbol{x},\boldsymbol{u})\delta\boldsymbol{x} + \boldsymbol{D}(\boldsymbol{x},\boldsymbol{u})\delta\boldsymbol{u}
\end{aligned}
\end{equation}
where $\boldsymbol{A}(\boldsymbol{x}, \boldsymbol{u}, \boldsymbol{w}) = \frac{\partial \boldsymbol{f}}{\partial \boldsymbol{x}} + \sum_{i=1}^m \frac{\partial \boldsymbol{b}_i}{\partial \boldsymbol{x}} u_{i} + \sum_{j=1}^p \frac{\partial \boldsymbol{b}_{w,j}}{\partial \boldsymbol{x}} w_{j}$, $\boldsymbol{C}(\boldsymbol{x},\boldsymbol{u}) = \frac{\partial \boldsymbol{g}}{\partial \boldsymbol{x}}$ and $\boldsymbol{D}(\boldsymbol{x},\boldsymbol{u}) = \frac{\partial \boldsymbol{g}}{\partial \boldsymbol{u}}$, with $u_i$ and $w_j$ denoting the $i^{th}$ element of $\boldsymbol{u}$ and $j^{th}$ element of $\boldsymbol{w}$ respectively. 

The RCCM approach for a system with state vectors that live in Euclidean space ($\mathcal{X}=\mathbb R^n$) involves finding a Riemannian metric $\boldsymbol{M}(\boldsymbol{x}):\mathbb{R}^n\to\mathbb{R}^{n\times n}$ and feedback controller in the form of \eqref{eqn: fb_ctrl_format} such that $\boldsymbol{M}(\boldsymbol{x})$ is uniformly bounded (i.e., $\underline{m}\boldsymbol{I}_n\preceq\boldsymbol{M}(\boldsymbol{x})\preceq\overline{m}\boldsymbol{I}_n$) and the inner product of tangent vectors with respect to the metric $V(\boldsymbol{x}, \delta\boldsymbol{x})=\delta\boldsymbol{x}^{\top}\boldsymbol{M}(\boldsymbol{x})\delta\boldsymbol{x}$ decays to some disturbance-dependant bound. Under such conditions, the metric $\boldsymbol{M}$ is termed an RCCM, and its associated feedback controller will then guarantee $H_\infty$-like input-output stability for the closed-loop system \cite{Manchester2018}. 

However, for states that live in a general Lie group, constraints are posed on both the states and differential states, such that $(\boldsymbol{x},\delta\boldsymbol{x})\in T\mathcal{X}$. \cite{Wu2024} laid the groundwork to search for a CCM that respects the inherent geometry of the underlying submanifold $\mathcal X$. First, tangent vectors of the system can be represented with intrinsic coordinates following \eqref{eqn: extrinsic_to_intrinsic}. Therefore, for systems that satisfy Assumption \ref{assum: sys_lie}, we are able to find smooth functions $\boldsymbol{v}\in \mathbb R^q$, $\boldsymbol{E}(\boldsymbol{x})\in \mathbb R^{q\times m}$ and $\boldsymbol{E}_w(\boldsymbol{x})\in \mathbb R^{q\times p}$ such that 
\begin{equation}\label{eqn: E_matrix_transformation}
    \delta\boldsymbol{x} = \boldsymbol{S}\boldsymbol{v},\,\boldsymbol{B}=\boldsymbol{S}\boldsymbol{E},\,\boldsymbol{B}_w=\boldsymbol{S}\boldsymbol{E}_w.
\end{equation}
Moreover, instead of finding a CCM $\boldsymbol{M}(\boldsymbol{x})\in\mathbb R^{n\times n}$ that respects the geometry of $\mathcal{X}$, it is proposed to search for a lower-dimensional metric $\boldsymbol{\mathcal M}(\boldsymbol{x}): \mathcal{X} \to \mathbb R ^{q\times q}$ and project it to its full dimensions using $\boldsymbol{M}(\boldsymbol{x}) = \boldsymbol{P}_S(\boldsymbol{x}) \boldsymbol{\mathcal M}(\boldsymbol{x}) \boldsymbol{P}_S^\top(\boldsymbol{x})$,
where $\boldsymbol{P}_S=\boldsymbol{S}(\boldsymbol{S}^\top\boldsymbol{S})^{-1}$ is the projection operator. The existence and smoothness of $\boldsymbol{P}_S(\boldsymbol{x})$ are guaranteed by Assumption \ref{assum: sys_lie}. $\boldsymbol{\mathcal M}(\boldsymbol{x})$ should be uniformly bounded such that there exist two positive constants $\underline{\mathfrak m}, \overline{\mathfrak m}$ satisfying
\begin{equation}\label{eqn: cond_metric_uni_bound}
    \underline{\mathfrak m}\boldsymbol{I}_q \preceq \boldsymbol{\mathcal M}(\boldsymbol{x}) \preceq \overline{\mathfrak m}\boldsymbol{I}_q.
\end{equation}

With this method, the tangent vectors respect the constraints from the tangent bundle and the Riemannian metric remains a smooth varying inner product on the tangent space of the state manifold $\mathcal X$, such that $\forall (\boldsymbol{x},\,\boldsymbol{S}(\boldsymbol{x})\boldsymbol{v})\in T\mathcal{X}$
\begin{equation}\label{eqn: intrinsic_diff_CLF}
    V = \boldsymbol{v}^\top\boldsymbol{S}^\top\boldsymbol{P}_S\boldsymbol{\mathcal M}\boldsymbol{P}_S^\top\boldsymbol{S}\boldsymbol{v} = \boldsymbol{v}^\top\boldsymbol{\mathcal M}\boldsymbol{v}.
\end{equation}

Some notations on Riemannian geometry are introduced before moving on to the main results. Given a Lie group state space manifold $\mathcal X$, the set of all possible paths between $a$ and $b$ along $\mathcal X$ is denoted as $\Gamma_{\mathcal X}(a, b)$, where each $\boldsymbol{c}\in \Gamma_{\mathcal X}(a, b)$ represents a smooth mapping $\boldsymbol{c}(s):[0,1]\to\mathcal{X}$ satisfying $\boldsymbol{c}(0)=a$, $\boldsymbol{c}(1)=b$. Hence, for all $s\in [0,1]$, $\boldsymbol{c}(s)\in\mathcal{X}$ and $\frac{\partial \boldsymbol{c}}{\partial s}(s)=\boldsymbol{S}(\boldsymbol{c}(s))\boldsymbol{v}\in T_{\boldsymbol{c}(s)}\mathcal{X}$. Given the state manifold $\mathcal X$ equipped with a Riemannian metric $\boldsymbol{\mathcal{M}}$, the Riemannian energy of a path $\boldsymbol{c}$ is defined as $\mathcal{E}(\boldsymbol{c})=\int^1_0\boldsymbol{v}^\top\boldsymbol{\mathcal{M}}(\boldsymbol{c}(s))\boldsymbol{v}\,ds$. The notation $\mathcal{E}(a,b)$ also denotes the minimum energy of a path joining $a,b$ such that $\mathcal{E}(a,b)=\inf_{\boldsymbol{c}\in\Gamma_{\mathcal{X}}(a,b)}\mathcal{E}(\boldsymbol{c})$.

Some preliminaries on Matrix Lie groups are also provided. We define $\operatorname{Tr}(\cdot)$ to be the matrix trace and $\otimes$ as the Kronecker product. $\cdot^\wedge$ and $\cdot^\vee$ denote hat-map and vee-map respectively, which maps a Euclidean vector to its Lie algebra linearly, and vice versa. We consider Lie groups in the form $\mathbb R^{a}\times G_c\times G_m$, where $G_c$ is a compact Lie group and $G_m$ is a matrix Lie group. As compact Lie groups are isomorphic to matrix Lie groups, given a matrix Lie group $\mathcal G$ with matrix elements $\boldsymbol{G}\in\mathcal G$ embedded in $\mathbb R^{d\times d}$, representing a dimension $q_G$ manifold, there exists an associated Lie algebra $\mathfrak g$ diffeomorphic to $\mathbb R^{q_G}$. We then define a matrix $\boldsymbol{E}_{vec} \in \mathbb R^{d^2\times q_G}$ such that each column $\boldsymbol{E}_{vec,i}$ represents the vectorized basis vector 
\begin{equation}
    \boldsymbol{E}_{vec,i}=\operatorname{vec}(\boldsymbol{e}_i^\wedge)
\end{equation}
for all $i\in\mathbb Z_{q_G}$. Therefore, for each $\boldsymbol{\Xi}\in\mathfrak g$, we can find $\boldsymbol{v}\in \mathbb R^{q_G}$ such that $\operatorname{vec}(\boldsymbol{\Xi})=\boldsymbol{E}_{vec}\boldsymbol{v}$, where $\boldsymbol{v}$ represents the left-trvialized intrinsic coordinates of tangent vector as in ([Sv]). Therefore, $\boldsymbol{S}$ matrix for matrix Lie groups can be found as 
\begin{equation}
\begin{aligned}
    \operatorname{vec}(\dot{\boldsymbol{G}}) &= \operatorname{vec}(\boldsymbol{G}\boldsymbol{\Xi})=(\boldsymbol{I}_{d}\otimes\boldsymbol{G})\operatorname{vec}(\boldsymbol{\Xi})\\
    &=(\boldsymbol{I}_{d}\otimes\boldsymbol{G})\boldsymbol{E}_{vec}\boldsymbol{v} = \boldsymbol{S}\boldsymbol{v},
\end{aligned}
\end{equation}
using the identity vectorization identities. Therefore, each column of $\boldsymbol{S}$ matrix $\boldsymbol{s}_i$ can be found as $\boldsymbol{s}_i = \operatorname{vec}(\boldsymbol{G}\boldsymbol{e}_i^\wedge)$ for all $i\in \mathbb Z_{q_G}$. Moreover, 

For the Euclidean space $\mathbb R^{a}$, the $\boldsymbol{S}$ and $\boldsymbol{E}_{vec}$ matrix and is just the identity $\boldsymbol{I}_{a}$ as the hat map in Euclidean space is trivially $\boldsymbol{e}_i^\wedge = \boldsymbol{e}_i$. For a state space $\mathcal{X}=\mathcal{G}_1\times ... \times\mathcal{G}_k\times\mathbb R^{a}$ composed of multiple compact Lie groups and the Euclidean, the $\boldsymbol{S}$ and $\boldsymbol{E}_{vec}$ can be built by block diagonalization of each of their respective tangent vector span and basis vector span matrices. That is, $\boldsymbol{S}=\operatorname{diag}(\boldsymbol{S}_1, ..., \boldsymbol{S}_k,\boldsymbol{I}_a)$.

Moreover, a left-invariant metric tensor $g_{\mathcal{G}}(\cdot,\cdot):\mathfrak{g}\times\mathfrak{g}\to\mathbb R$ can be characterized on a matrix Lie group $\mathcal G$ using Lie algebra, where $g_{\mathcal{G}}(\boldsymbol{\Xi}_1, \boldsymbol{\Xi}_2)=\operatorname{Tr}(\boldsymbol{\Xi}_1^\top\boldsymbol{\Xi}_2)=\operatorname{vec}(\boldsymbol{\Xi}_1)^\top\operatorname{vec}(\boldsymbol{\Xi}_2)=\boldsymbol{v}_1^\top\boldsymbol{E}_{vec}^\top\boldsymbol{E}_{vec}\boldsymbol{v}_2$. Therefore, given some paths $\boldsymbol{c}(s)\in \Gamma_{\mathcal{G}}(a,b)$, the length can be defined as $L_{\mathcal{G}}(a,b)=\int^1_0\sqrt{\boldsymbol{v}(s)^\top\boldsymbol{E}_{vec}^\top\boldsymbol{E}_{vec}\boldsymbol{v}(s)} ds$ and the geodesic is defined as $\boldsymbol{\gamma}_{\mathcal{G}}(t)= \arg_{\boldsymbol{c}\in\Gamma_{\mathcal{G}}(a,b)}\min{L(\boldsymbol{c})}$. This works trivially for Euclidean space as well with $\boldsymbol{E}_{vec}=\boldsymbol{I}_{a}$.

\section{Neural Robust Control on Lie Groups Using Contraction Methods}
In this section, we present our main results. Conditions for the closed-loop system to achieve a given $\mathcal{L}_\infty$ gain bound are first proposed. Then, a learning-based framework is implemented to find an RCCM and a feedback controller that satisfy these conditions and minimize the gain bound. 

\subsection{RCCMs on Lie groups}

\cite{Zhao2022} established linear matrix inequalities (LMIs) guaranteeing a $\mathcal{L}_\infty$ gain bound for Euclidean systems. We extend these results to systems evolving on Lie groups. 

We consider the differential closed-loop system associated with \eqref{eqn: sys_ctrl_affine_dist} expressed in intrinsic coordinates $\boldsymbol{v}$. Given a feedback controller $\boldsymbol{u}$ in the form of \eqref{eqn: fb_ctrl_format}, the differential feedback controller can be found as $\delta\boldsymbol{u}=\boldsymbol{K}(\boldsymbol{x})\delta\boldsymbol{x}$, where $\boldsymbol{K}=\frac{\partial\boldsymbol{k}}{\partial\boldsymbol{x}}$. The resulting differential closed-loop system in intrinsic coordinates is
\begin{equation}\label{eqn: sys_closed_diff}
    \begin{aligned}
        \dot{\boldsymbol{v}} &= \frac{d}{dt}(\boldsymbol{P}_S^\top\delta\boldsymbol{x})=\dot{\boldsymbol{P}}_S^\top\boldsymbol{S}\boldsymbol{v} + \boldsymbol{P}_S^\top\delta\dot{\boldsymbol{x}} \\
        &=(\dot{\boldsymbol{P}}_S^\top+\boldsymbol{P}_S^\top\boldsymbol{A}+\boldsymbol{E}\boldsymbol{K})\boldsymbol{S}\boldsymbol{v} + \boldsymbol{E}_w\delta\boldsymbol{w} \\
        &= \boldsymbol{\mathcal{A}}\boldsymbol{v}+\boldsymbol{E}_w\delta\boldsymbol{w}, \\
        \delta\boldsymbol{z} &=(\boldsymbol{C}+\boldsymbol{D}\boldsymbol{K})\boldsymbol{S}\boldsymbol{v} = \boldsymbol{\mathcal{C}}\boldsymbol{v},
    \end{aligned}
\end{equation}
where $\dot{\boldsymbol{P}}_S^\top=\partial_{\dot{\boldsymbol{x}}}\boldsymbol{P}_S^\top$ with $\dot{\boldsymbol{x}}$ given by \eqref{eqn: sys_ctrl_affine_dist}, $\boldsymbol{\mathcal{A}}=(\dot{\boldsymbol{P}}_S^\top+\boldsymbol{P}_S^\top\boldsymbol{A}+\boldsymbol{E}\boldsymbol{K})\boldsymbol{S}$ and $\boldsymbol{\mathcal{C}}=(\boldsymbol{C}+\boldsymbol{D}\boldsymbol{K})\boldsymbol{S}$.

\begin{proposition}\label{prop: L-infty}
    A closed-loop system satisfying Assumption \ref{assum: sys_lie} admits a universal $\mathcal{L}_\infty$ gain bound of $\alpha>0$, if there exist a uniformly bounded symmetric metric $\boldsymbol{\mathcal M}(\boldsymbol{x})$, a feedback controller $\boldsymbol{u}$ of the form of \eqref{eqn: fb_ctrl_format} and constants $\lambda,\mu>0$, such that for all $\boldsymbol{x},\boldsymbol{w}$,
    \begin{equation}\label{eqn: LMI_R1}
\boldsymbol{R}_1=\begin{bmatrix}
    \dot{\boldsymbol{\mathcal M}} + \big\langle \boldsymbol{\mathcal M}\boldsymbol{\mathcal{A}} \big\rangle + 2\lambda \boldsymbol{\mathcal M} & \boldsymbol{\mathcal{M}}\boldsymbol{E}_w \\
    \boldsymbol{E}_w^\top\boldsymbol{\mathcal{M}} & -\mu \boldsymbol{I}_p
\end{bmatrix}
\preceq \boldsymbol{0},
\end{equation}
\begin{equation}\label{eqn: LMI_R2}
\boldsymbol{R}_2=\begin{bmatrix}
    2\lambda \boldsymbol{\mathcal M} - \alpha^{-1}\boldsymbol{\mathcal{C}}^\top\boldsymbol{\mathcal{C}}  & \boldsymbol{0} \\
    \boldsymbol{0} & (\alpha-\mu) \boldsymbol{I}_p
\end{bmatrix}
\succeq \boldsymbol{0},
\end{equation}
where $\dot{\boldsymbol{\mathcal M}} = \partial_{\dot{\boldsymbol{x}}}\boldsymbol{\mathcal M}$ with $\dot{\boldsymbol{x}}$ given by \eqref{eqn: sys_ctrl_affine_dist}. 
\end{proposition}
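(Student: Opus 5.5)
Following \cite{Zhao2022}, we first derive a differential dissipation inequality in intrinsic coordinates and then integrate it along a family of paths joining the nominal and actual trajectories. Fix a nominal solution $(\boldsymbol{x}^*,\boldsymbol{u}^*,\boldsymbol{w}^*,\boldsymbol{z}^*)$ and an actual solution $(\boldsymbol{x},\boldsymbol{u},\boldsymbol{w},\boldsymbol{z})$ under \eqref{eqn: fb_ctrl_format}. At each time $t$, embed both in a smooth family $\boldsymbol{c}(t,s)\in\Gamma_{\mathcal X}(\boldsymbol{x}^*(t),\boldsymbol{x}(t))$. Choose $\boldsymbol{c}(0,\cdot)$ as a minimizing (or $\epsilon$-minimizing) path for $\mathcal{E}(\boldsymbol{x}^*(0),\boldsymbol{x}(0))$. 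Propagate each point $\boldsymbol{c}(\cdot,s)$ forward with the closed-loop dynamics driven by the interpolated disturbance $\boldsymbol{w}_s=\boldsymbol{w}^*+s(\boldsymbol{w}-\boldsymbol{w}^*)$. By Assumption \ref{assum: sys_lie} and the transversality conditions \eqref{eqn: cond_transversality}, every $\boldsymbol{c}(t,s)$ stays in $\mathcal X$. Its $s$-derivative is then tangent and can be written $\partial_s\boldsymbol{c}=\boldsymbol{S}\boldsymbol{v}$, with $\boldsymbol{v}$ obeying \eqref{eqn: sys_closed_diff} and $\delta\boldsymbol{w}=\boldsymbol{w}-\boldsymbol{w}^*$. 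The feedback form $\boldsymbol{k}(\boldsymbol{x},\boldsymbol{x}^*)+\boldsymbol{u}^*$ with $\boldsymbol{k}(\boldsymbol{x}^*,\boldsymbol{x}^*)=\boldsymbol{0}$ guarantees that $s=0$ reproduces the nominal trajectory.

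\textbf{Step 1 (differential dissipation).} With $V=\boldsymbol{v}^\top\boldsymbol{\mathcal M}\boldsymbol{v}$ as in \eqref{eqn: intrinsic_diff_CLF}, multiply \eqref{eqn: LMI_R1} on both sides by $[\boldsymbol{v}^\top,\delta\boldsymbol{w}^\top]$. This gives
\begin{equation*}
\dot V\le -2\lambda V+\mu\|\delta\boldsymbol{w}\|^2 .
\end{equation*}
The right-hand side does not depend on $\boldsymbol{K}$ or $\boldsymbol{w}$ beyond what is already in $\boldsymbol{\mathcal A}$, since \eqref{eqn: LMI_R1} holds for all $\boldsymbol{x},\boldsymbol{w}$. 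Integrate over $s\in[0,1]$ to get the energy $\mathcal{E}(t)=\mathcal{E}(\boldsymbol{c}(t,\cdot))$. The comparison lemma then gives
\begin{equation*}
\mathcal{E}(t)\le e^{-2\lambda t}\mathcal{E}(0)+\frac{\mu}{2\lambda}\|\boldsymbol{w}-\boldsymbol{w}^*\|^2_{\mathcal L_\infty^{[0,t]}} .
\end{equation*}

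\textbf{Step 2 (output bound).} Multiply \eqref{eqn: LMI_R2} by the same vector. This yields $\alpha^{-1}\|\delta\boldsymbol{z}\|^2\le 2\lambda V+(\alpha-\mu)\|\delta\boldsymbol{w}\|^2$ pointwise, where $\delta\boldsymbol{z}=\boldsymbol{\mathcal C}\boldsymbol{v}$ is the $s$-derivative of the output along the path. For Euclidean output deviations, $\boldsymbol{z}-\boldsymbol{z}^*=\int_0^1\delta\boldsymbol{z}\,ds$. Cauchy--Schwarz then gives $\|\boldsymbol{z}-\boldsymbol{z}^*\|^2\le\int_0^1\|\delta\boldsymbol{z}\|^2ds\le \alpha\big(2\lambda\mathcal{E}(t)+(\alpha-\mu)\|\delta\boldsymbol{w}(t)\|^2\big)$. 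For a Lie group output, the Riemannian distance is bounded by the path length in the same way. Substituting Step 1 makes the $\mu$ terms cancel:
\begin{equation*}
\|\boldsymbol{z}-\boldsymbol{z}^*\|^2_{\mathcal L_\infty^{[0,T]}}\le \alpha^2\|\boldsymbol{w}-\boldsymbol{w}^*\|^2_{\mathcal L_\infty^{[0,T]}}+2\lambda\alpha\,\mathcal{E}(\boldsymbol{x}^*(0),\boldsymbol{x}(0)).
\end{equation*}
We set $\beta=2\lambda\alpha\,\mathcal{E}(\cdot,\cdot)$. It is nonnegative, vanishes on the diagonal, and is finite by the uniform bound \eqref{eqn: cond_metric_uni_bound} together with the connectedness of the Lie group components. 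If $\mathcal{X}$ is not connected, we restrict to the component containing the nominal trajectory.

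\textbf{Main obstacle.} The delicate point is Step 1's use of the intrinsic coordinates. We must check that $\boldsymbol{v}=\boldsymbol{P}_S^\top\partial_s\boldsymbol{c}$ really evolves by \eqref{eqn: sys_closed_diff} along the propagated family. This requires $\partial_t\partial_s\boldsymbol{c}=\partial_s\partial_t\boldsymbol{c}$ together with $\boldsymbol{P}_S^\top\boldsymbol{S}=\boldsymbol{I}_q$, $\boldsymbol{P}_S^\top\boldsymbol{B}=\boldsymbol{E}$ and $\boldsymbol{P}_S^\top\boldsymbol{B}_w=\boldsymbol{E}_w$, all of which follow from \eqref{eqn: E_matrix_transformation}. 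We also need the extrinsic derivative $\delta\dot{\boldsymbol{x}}$ to remain tangent, which follows from differentiating $\boldsymbol{h}(\boldsymbol{c})=\boldsymbol{0}$. Finally, smoothness of $\boldsymbol{P}_S$ and bounded $\boldsymbol{S}^\top\boldsymbol{S}$ (Assumption \ref{assum: sys_lie}) justify exchanging differentiation and integration in $s$, and ensure forward completeness of the family.
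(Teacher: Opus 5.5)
Your proof is correct, but it builds the path family differently from the paper.

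\textbf{The paper's construction.} It re-selects the minimizing geodesic $\boldsymbol{\gamma}_{\boldsymbol{\mathcal M}}$ at every time $t_i$ and propagates it only over a short window $[t_i,t_i+\epsilon)$. It compares the propagated energy with the geodesic energy and lets $\epsilon\to 0$, which gives the differential inequality \eqref{eqn: energy_geodesic_inequality} for $\mathcal{E}(\boldsymbol{x}(t),\boldsymbol{x}^*(t))$ itself. The output bound is then evaluated along the geodesic at each time.

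\textbf{Your construction.} You fix one (near-)minimizing path at $t=0$ and flow the whole family forward under the true closed loop $\boldsymbol{k}(\boldsymbol{c},\boldsymbol{x}^*)+\boldsymbol{u}^*$ with the interpolated disturbance. Then \eqref{eqn: sys_closed_diff} holds exactly along the family for all time, and both the energy decay and the output bound are read off that single family.

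\textbf{What each buys.} Your route does not need minimizing geodesics, or their regularity in $t$, for $t>0$. It also never differentiates $t\mapsto\mathcal{E}(\boldsymbol{x}(t),\boldsymbol{x}^*(t))$, which need not be differentiable where geodesics are non-unique. A non-attained infimum is handled through $\epsilon$-minimizers. The paper's route gives a pointwise-in-time inequality for the geodesic energy, so the Riemannian energy itself serves as an ISS-type Lyapunov function. Restarting your construction at an arbitrary $t_0$ recovers its integrated form. The paper's route also needs only local-in-time existence of the intermediate trajectories.

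\textbf{What to tighten.} That last point is where your sketch is loose. Smoothness of $\boldsymbol{P}_S$ and boundedness of $\boldsymbol{S}^\top\boldsymbol{S}$ do not by themselves give existence of every $\boldsymbol{c}(\cdot,s)$ on $[0,T]$. It does follow by continuation. Take a maximal $s$-interval $[0,a)$ on which the family exists up to time $t$. Step 1 applied there bounds $\int_0^a V\,ds$ by $\mathcal{E}(0)+\frac{\mu}{2\lambda}\|\boldsymbol{w}-\boldsymbol{w}^*\|^2_{\mathcal{L}_\infty^{[0,t]}}$. Together with $\underline{\mathfrak m}$ and $\sup\|\boldsymbol{S}^\top\boldsymbol{S}\|$, this bounds $\|\boldsymbol{c}(\tau,s)-\boldsymbol{x}^*(\tau)\|$ uniformly, which rules out finite escape at $s=a$.

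Also, you do not need tangency of the extrinsic derivative $\partial_t\partial_s\boldsymbol{c}$, and in general you do not have it. The identity $\dot{\boldsymbol{v}}=\dot{\boldsymbol{P}}_S^\top\partial_s\boldsymbol{c}+\boldsymbol{P}_S^\top\partial_t\partial_s\boldsymbol{c}$ only uses $\partial_s\boldsymbol{c}=\boldsymbol{S}\boldsymbol{v}$ at each time. That follows from differentiating $\boldsymbol{h}(\boldsymbol{c}(t,s))=\boldsymbol{0}$ in $s$.
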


\begin{proof}
     The following construction is inspired by \cite{Zhao2022}, with modifications tailored to our setting. A smoothly parametrized path $\boldsymbol{c}(t)\in\Gamma_{\mathcal{X}}(\boldsymbol{x}(t),\boldsymbol{x}^*(t))$ is first defined, where $\boldsymbol{c}(t,0)=\boldsymbol{x}(t)$ and $\boldsymbol{c}(t,1)=\boldsymbol{x}^*(t)$. At each fixed time $t = t_i\in[0, \infty)$, we choose $\boldsymbol{c}(t)$ to be $\boldsymbol{\gamma}_{\boldsymbol{\mathcal M}}(t)$, the energy minimizing geodesic under metric $\boldsymbol{\mathcal{M}}$, which is defined as
     \begin{equation}
     \boldsymbol{\gamma}_{\boldsymbol{\mathcal M}}(t)=\arg_{\boldsymbol{c}\in\Gamma_{\mathcal{X}}(\boldsymbol{x},\boldsymbol{x}^*)}\min{\mathcal{E}(\boldsymbol{c}(t))}.
     \end{equation}
     Note that $\mathcal{E}(\boldsymbol{x}(t),\boldsymbol{x}^*(t))=\mathcal{E}(\boldsymbol{\gamma}_{\boldsymbol{\mathcal{M}}}(t))$. Based on the system in \eqref{eqn: sys_ctrl_affine_dist} and the feedback controller in \eqref{eqn: fb_ctrl_format}, we define the corresponding parameterized input, disturbance, and output paths for $s\in[0,1]$ as     
    \begin{equation}\label{eqn: para_paths}
        \begin{aligned}
        \boldsymbol{c}(t,s) &= \boldsymbol{\gamma}_{\boldsymbol{\mathcal M}}(t,s),\\
        \boldsymbol{\nu}(t,s) &= \boldsymbol{k}(\boldsymbol{c}(t,s),\boldsymbol{x}^*(t))+\boldsymbol{u}^*(t), \\
        \boldsymbol{\omega}(t,s) &= (1-s)\boldsymbol{w}^*(t)+s\boldsymbol{w}(t), \\
        \boldsymbol{\zeta}(t,s) &= \boldsymbol{g}(\boldsymbol{c}(t,s), \boldsymbol{\nu}(t,s)).\\
        \end{aligned}
    \end{equation}
    For any fixed time $t=t_i$, differentiate the smooth paths with respect to $s$, denoting $\frac{\partial}{\partial s}$ with subscript $s$, to get 
    \begin{equation}\label{eqn: para_paths_diff}
        \begin{aligned}
        \boldsymbol{c}_s (t,s) &= \tfrac{\partial\boldsymbol{\gamma}_{\boldsymbol{\mathcal{M}}}}{\partial s} (t,s),\\
        \boldsymbol{\nu}_s (t,s) &= \boldsymbol{K}(\boldsymbol{c}(t,s))\boldsymbol{c}_s (t,s),\\
        \boldsymbol{\omega}_s (t,s) &= \boldsymbol{w}(t) - \boldsymbol{w}^*(t),\\
        \boldsymbol{\zeta}_s (t,s) &= \boldsymbol{C}(\boldsymbol{c}(t,s), \boldsymbol{\nu}(t,s))\boldsymbol{c}_s (t,s)\\ &\,\,\,\,\,+\boldsymbol{D}(\boldsymbol{c}(t,s), \boldsymbol{\nu}(t,s))\boldsymbol{\nu}_s (t,s).
        \end{aligned}
    \end{equation}    
    where $\tfrac{\partial \boldsymbol{\gamma}_{\boldsymbol{\mathcal{M}}}}{\partial s} (t,s)=\boldsymbol{S}(\boldsymbol{\gamma}_{\boldsymbol{\mathcal{M}}}(t,s))\boldsymbol{\vartheta}(t,s) \in T_{\boldsymbol{\gamma}_{\boldsymbol{\mathcal{M}}}(t,s)}\mathcal{X}$, and $\boldsymbol{\vartheta}(t,s)$ represents parameterized tangent vectors in intrinsic coordinates according to the definition in \eqref{eqn: extrinsic_to_intrinsic}. On each short interval $[t_i,t_i+\epsilon)$, we fix $(\boldsymbol{\nu},\boldsymbol{\omega})$ at time $t_i$ and propagate $\boldsymbol{c}(t,s)$ under \eqref{eqn: sys_ctrl_affine_dist} for each $s$ with initial conditions $\boldsymbol{c}(t_i,s)$, yielding the differential dynamics similar to \eqref{eqn: sys_closed_diff} with $\boldsymbol{v}=\boldsymbol{\vartheta}$, $\delta\boldsymbol{w}=\boldsymbol{\omega}_s$, $\delta\boldsymbol{z}=\boldsymbol{\zeta}_s$
    \begin{equation}\label{eqn: para_diff_closed_loop}
        \begin{aligned}
            &\dot{\boldsymbol{\vartheta}} = \boldsymbol{\mathcal{A}}\boldsymbol{\vartheta} + \boldsymbol{E}_w \boldsymbol{\omega}_s, \\
            &\boldsymbol{\zeta}_s = \boldsymbol{\mathcal{C}}\boldsymbol{\vartheta}.
        \end{aligned}
    \end{equation}  

    Now, left-multiply and right-multiply the vector $[\boldsymbol{\vartheta}^\top \boldsymbol{\omega}_s^\top]$ and its transpose respectively to the LMI in \eqref{eqn: LMI_R1} gives us the inequality $\boldsymbol{\vartheta}^\top\big(\dot{\boldsymbol{\mathcal M}} + \big\langle \boldsymbol{\mathcal M}\boldsymbol{\mathcal{A}} \big\rangle \big)\boldsymbol{\vartheta} + 2\boldsymbol{\vartheta}^\top\boldsymbol{\mathcal{M}}\boldsymbol{E}_w\boldsymbol{\omega}_s + 2\lambda\boldsymbol{\vartheta}^\top\boldsymbol{\mathcal M}\boldsymbol{\vartheta} - \mu\boldsymbol{\omega}_s^\top\boldsymbol{\omega}_s\leq 0$. Combining with \eqref{eqn: para_diff_closed_loop} gives us
    \begin{equation}\label{eqn: lyap_inequality_para}
        \frac{d}{dt}(\boldsymbol{\vartheta}^\top\boldsymbol{\mathcal{M}}\boldsymbol{\vartheta}) \leq -2\lambda\boldsymbol{\vartheta}^\top\boldsymbol{\mathcal{M}}\boldsymbol{\vartheta} + \mu\boldsymbol{\omega}_s^\top\boldsymbol{\omega}_s.
    \end{equation}
    Integrating \eqref{eqn: lyap_inequality_para} over $s\in[0,1]$ and interchanging integration and differentiation results in  
    \begin{equation}\label{eqn: energy_path_inequality}
        \frac{d}{dt}\mathcal{E}(\boldsymbol{c}(t))\leq -2\lambda \mathcal{E}(\boldsymbol{c}(t)) + \mu\|\boldsymbol{w}(t) - \boldsymbol{w}^*(t)\|^2
    \end{equation}
    for $t\in[t_i,t_i+\epsilon)$. For sufficiently small $\epsilon$, \eqref{eqn: energy_path_inequality} indicates 
    \begin{equation}
        \begin{aligned}
            \mathcal{E}(\boldsymbol{c}(t)) &\leq \mathcal{E}(\boldsymbol{c}(t_i)) e^{-2\lambda (t - t_i)} \\
            &+ \mu \int_{t_i}^{t} e^{-2\lambda (t - \tau)} \| \boldsymbol{w}(\tau) - \boldsymbol{w}^*(\tau) \|^2 d\tau
        \end{aligned}
    \end{equation}
    by the comparison lemma. Given that $\mathcal{E}(\boldsymbol{\gamma}_{\boldsymbol{\mathcal{M}}}(t))\leq \mathcal{E}(\boldsymbol{c}(t))$ by definition of $\boldsymbol{\gamma}_{\boldsymbol{\mathcal{M}}}$ and $\mathcal{E}(\boldsymbol{c}(t_i)) = \mathcal{E}(\boldsymbol{\gamma}_{\boldsymbol{\mathcal{M}}}(t_i))$ by construction, we can conclude $\mathcal{E}(\boldsymbol{\gamma}_{\boldsymbol{\mathcal{M}}}(t)) \leq \mathcal{E}(\boldsymbol{\gamma}_{\boldsymbol{\mathcal{M}}}(t_i)) e^{-2\lambda (t - t_i)} + \mu \int_{t_i}^{t} e^{-2\lambda (t - \tau)} \| \boldsymbol{w}(\tau) - \boldsymbol{w}^*(\tau) \|^2 d\tau$. Since $t_i$ is arbitrary, taking $\epsilon\to 0$ gives us
    \begin{equation}\label{eqn: energy_geodesic_inequality}
        \frac{d}{dt}\mathcal{E}(\boldsymbol{\gamma}_{\boldsymbol{\mathcal{M}}}(t))\leq -2\lambda \mathcal{E}(\boldsymbol{\gamma}_{\boldsymbol{\mathcal{M}}}(t)) + \mu\|\boldsymbol{w}(t) - \boldsymbol{w}^*(t)\|^2
    \end{equation}    
    for all $t\in [0,\infty)$. Integrating \eqref{eqn: energy_geodesic_inequality} from $0$ to $t$ gives
    \begin{equation}\label{eqn: energy_geodesic_bound}
        \mathcal{E}(\boldsymbol{\gamma}_{\boldsymbol{\mathcal{M}}}(t)) \leq \mathcal{E}(\boldsymbol{\gamma}_{\boldsymbol{\mathcal{M}}}(0))e^{-2\lambda t} + \frac{\mu}{2\lambda}\| \boldsymbol{w} - \boldsymbol{w}^*\|^2_{\mathcal{L}_\infty^{[0,t]}}.
    \end{equation}

    We then left- and right-multiply the vector $[\boldsymbol{\vartheta}^\top \boldsymbol{\omega}_s^\top]$ and its transpose respectively to the LMI in \eqref{eqn: LMI_R2}, which by rearranging gives us the inequality $\alpha^{-1}\boldsymbol{\zeta}_s^\top\boldsymbol{\zeta}_s\leq2\lambda\boldsymbol{\vartheta}^\top\boldsymbol{\mathcal{M}}\boldsymbol{\vartheta} + (\alpha-\mu)\boldsymbol{\omega}_s^\top\boldsymbol{\omega}_s\leq 0$. Integrating this inequality along arbitrary path $\boldsymbol{c}(t)$ gives 
    \begin{equation}\label{eqn: R2_ineq_diff}
        \frac{1}{\alpha}\int^1_0\|\boldsymbol{\zeta}_s(t,s)\|^2ds\leq 2\lambda \mathcal{E}(\boldsymbol{c}(t))+(\alpha-\mu)\|\boldsymbol{w}(t)-\boldsymbol{w}^*(t)\|^2.
    \end{equation}
    Applying the Cauchy-Schwarz inequality, we have $\int^1_0\|\boldsymbol{\zeta}_s(t,s)\|^2ds\geq\|\int^1_0\boldsymbol{\zeta}_s(t,s)ds \|^2=\|\boldsymbol{z}(t)-\boldsymbol{z}^*(t)\|^2$. Choosing $\boldsymbol{c}(t)=\boldsymbol{\gamma}_{\boldsymbol{\mathcal{M}}}(t)$ and the $\mathcal{L}_\infty$ norm for $\boldsymbol{w}$, we get
    \begin{equation}\label{eqn: energy_output_bound}
        \|\boldsymbol{z}(t)-\boldsymbol{z}^*(t)\|^2\leq2\alpha\lambda \mathcal{E}(\boldsymbol{\gamma}_{\boldsymbol{\mathcal{M}}}(t)) + \alpha(\alpha-\mu)\|\boldsymbol{w}-\boldsymbol{w}^*\|^2_{\mathcal{L}_{\infty}^{[0,t]}}.
    \end{equation}
    Combining \eqref{eqn: energy_geodesic_bound} and \eqref{eqn: energy_output_bound} yields $\|\boldsymbol{z}(t)-\boldsymbol{z}^*(t)\|^2\leq \alpha^2 \|\boldsymbol{w}-\boldsymbol{w}^*\|^2_{\mathcal{L}_{\infty}^{[0,t]}} + 2\alpha\lambda \mathcal{E}(\boldsymbol{\gamma}_{\boldsymbol{\mathcal{M}}}(0))e^{-2\lambda t}$ for any $t$. Therefore, for any $T>0$, 
    \begin{equation}
        \|\boldsymbol{z}-\boldsymbol{z}^*\|^2_{\mathcal{L}_{\infty}^{[0,T]}}\leq \alpha^2 \|\boldsymbol{w}-\boldsymbol{w}^*\|^2_{\mathcal{L}_{\infty}^{[0,T]}} + \beta(\boldsymbol{x}(0), \boldsymbol{x}^*(0)),
    \end{equation}
    where $\beta(\boldsymbol{x}, \boldsymbol{x}^*)=2\alpha\lambda \mathcal{E}(\boldsymbol{x},\boldsymbol{x}^*)$.
\end{proof}

\textit{Remark:} A closed-loop system satisfying Assumption \ref{assum: sys_lie} and the LMI's in (\ref{eqn: LMI_R1}), (\ref{eqn: LMI_R2}) implies that the differential system (\ref{eqn: sys_differential_dist}) has a differential $\mathcal L_\infty$ gain bound of $\alpha>0$ according to the definition in \cite{Zhao2022}, such that for any $t\in[0,T]$, 
\begin{equation}
        \|\delta\boldsymbol{z}(t)\|^2\leq \alpha^2 \|\delta\boldsymbol{w}\|^2_{\mathcal L_{\infty}^{[0, T]}} + 2\lambda\alpha V(\boldsymbol{x}(0),\delta\boldsymbol{x}(0)).
\end{equation}

\begin{proof}
    First, left-multiply the vector $
        [\boldsymbol{v}^\top, \delta\boldsymbol{w}^\top]
    $ and right-multiply its transpose to the LMI (\ref{eqn: LMI_R2}). After rearranging, we get the inequality
    \begin{equation}
    \delta\boldsymbol{z}^\top\delta\boldsymbol{z}\leq\alpha(2\lambda V+ (\alpha-\mu)\delta\boldsymbol{w}^\top\delta\boldsymbol{w}).
    \end{equation}
    Moreover, performing the same left and right multiplication to the LMI (\ref{eqn: LMI_R1}), it gives us the inequality $\boldsymbol{v}^\top\big(\dot{\boldsymbol{\mathcal M}} + \big\langle \boldsymbol{\mathcal M}\boldsymbol{\mathcal{A}} \big\rangle \big)\boldsymbol{v} + 2\boldsymbol{v}^\top\boldsymbol{\mathcal{M}}\boldsymbol{E}_w\delta\boldsymbol{w} + 2\lambda\boldsymbol{v}^\top\boldsymbol{\mathcal M}\boldsymbol{v} - \mu\delta\boldsymbol{w}^\top\delta\boldsymbol{w}\leq 0$. Defining a function $V(\boldsymbol{x}(t), \boldsymbol{v}(t))$ as in (\ref{eqn: intrinsic_diff_CLF}), we can obtain the following inequality
    \begin{equation}\label{eqn: lyap_inequality}
        \dot{V}(\boldsymbol{x}(t), \delta\boldsymbol{x}(t)) \leq -2\lambda V(\boldsymbol{x}(t), \delta\boldsymbol{x}(t)) + \mu\delta\boldsymbol{w}^\top\delta\boldsymbol{w}
    \end{equation}
    This implies $\dot{V}\leq 0$ when $V\geq  \frac{\mu}{2\lambda}\|\delta\boldsymbol{w}\|^2$, and therefore $V(\boldsymbol{x}(t), \delta\boldsymbol{x}(t)) \leq \max\{V(\boldsymbol{x}(0),\delta\boldsymbol{x}(0)), \frac{\mu}{2\lambda}\|\delta\boldsymbol{w}\|^2_{\mathcal L_{\infty}^{[0, T]}}\}\leq V(\boldsymbol{x}(0),\delta\boldsymbol{x}(0))+ \frac{\mu}{2\lambda}\|\delta\boldsymbol{w}\|^2_{\mathcal L_{\infty}^{[0, T]}}$ for any $t\in[0, T]$. Plugging this inequality into (\ref{eqn: lyap_inequality}), we get 
    \begin{equation}
        \|\delta\boldsymbol{z}(t)\|^2\leq \alpha^2 \|\delta\boldsymbol{w}\|^2_{\mathcal L_{\infty}^{[0, T]}} + 2\lambda\alpha V(\boldsymbol{x}(0),\delta\boldsymbol{x}(0))
    \end{equation}
    for any $t\in[0,T]$.
\end{proof}

The metric $\boldsymbol{\mathcal M}$ that satisfies the LMIs in \eqref{eqn: LMI_R1} and \eqref{eqn: LMI_R2} is termed an RCCM and the associated feedback controller will guarantee input-output stability for the closed-loop system.

\begin{corollary}  
    The output vector $\boldsymbol{z}$ lives in $\mathbb R^l$ in our derivations. A tube-like forward invariant set can be characterized in the output Euclidean space such that after transient effects, 
    \begin{equation}\label{eqn: tube_euclidean}
        \boldsymbol{z}\in\Omega(\boldsymbol{z}^*)=\{\boldsymbol{y}\in\mathbb R^l\mid\|\boldsymbol{y}-\boldsymbol{z}^*\|\leq\alpha\overline{w}\},
    \end{equation}
    where $\overline{w}=\|\boldsymbol{w}-\boldsymbol{w}^*\|_{\mathcal{L}_\infty}$. In the case where the output space lives in a Lie group and $\boldsymbol{z}\in\mathcal{Z}\subseteq\mathcal{X}\times\mathbb R^m$, output deviations can be measured using Riemannian metrics on $\mathcal{Z}$. Similar to \eqref{eqn: extrinsic_to_intrinsic}, we can decompose tangent vectors on $\mathcal{Z}$ such that $\delta\boldsymbol{z}=\boldsymbol{S}_z(\boldsymbol{z})\boldsymbol{v}_z\in T_{\boldsymbol{z}}\mathcal{Z}$. The uniform boundedness of $\boldsymbol{S}_z^\top\boldsymbol{S}_z$ still holds, as $\boldsymbol{S}^\top\boldsymbol{S}$ is uniformly bounded from Assumption \ref{assum: sys_lie} and basis of tangent spaces in $\mathbb R^m$ is represented by $\boldsymbol{I}_m$.
    
    A natural choice for Riemannian metrics on Lie groups will be inner products of the left-trivialized coordinates $\boldsymbol{v}_z$. We choose the metric $g(\boldsymbol{v}_{z,1},\boldsymbol{v}_{z,2})=\boldsymbol{v}_{z,1}^\top\boldsymbol{Q}\boldsymbol{v}_{z,2}=\delta\boldsymbol{z}_1^\top \boldsymbol{P}_{S_z}(\boldsymbol{z})\boldsymbol{Q}\boldsymbol{P}_{S_z}^\top(\boldsymbol{z})\delta\boldsymbol{z}_2$, where $\boldsymbol{P}_{S_z}=\boldsymbol{S}_z(\boldsymbol{S}_z^\top\boldsymbol{S}_z)^{-1}$. $\boldsymbol{Q}$ is a weighting matrix for each left-trivialized coordinate, so that the input-output gain can be tuned for each channel. Therefore, a generalized version of LMI \eqref{eqn: LMI_R2} is proposed as 
    \begin{equation}\label{eqn: LMI_R2_prime}
    \boldsymbol{R}_2'=\begin{bmatrix}
        2\lambda \boldsymbol{\mathcal{M}}-\alpha^{-1}\boldsymbol{\mathcal{C}}^\top\boldsymbol{P}_{S_z}\boldsymbol{Q}\boldsymbol{P}_{S_z}^\top\boldsymbol{\mathcal{C}}  & \boldsymbol{0} \\
        \boldsymbol{0} & (\alpha-\mu) \boldsymbol{I}_p
    \end{bmatrix}
    \succeq \boldsymbol{0}.
    \end{equation}
    This guarantees the Riemannian length between $\boldsymbol{z}$ and $\boldsymbol{z}^*$ has a universal $\mathcal{L}_\infty$ gain bound of $\alpha$, and a similar invariant set can be characterized on the output manifold such that 
    \begin{equation}
        \boldsymbol{z}\in\Omega_{\mathcal{Z}}(\boldsymbol{z}^*)=\{\boldsymbol{y}\in \mathcal{Z}\mid d_g(\boldsymbol{y},\boldsymbol{z}^*)\leq\alpha\overline{w}\},
    \end{equation}
    where $d_g(\boldsymbol{z},\boldsymbol{z}^*)$ is the Riemannian distance between $\boldsymbol{z},\boldsymbol{z}^*$ under metric $g$.
\end{corollary}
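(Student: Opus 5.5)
The corollary has two parts, and I would prove both by reusing the chain of inequalities from the proof of Proposition \ref{prop: L-infty}, changing only the step that turns the output differential into a distance.

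\textbf{Euclidean tube \eqref{eqn: tube_euclidean}.} Start from the pointwise bound obtained in that proof:
\[
\|\boldsymbol{z}(t)-\boldsymbol{z}^*(t)\|^2\leq \alpha^2 \|\boldsymbol{w}-\boldsymbol{w}^*\|^2_{\mathcal{L}_{\infty}^{[0,t]}} + 2\alpha\lambda\, \mathcal{E}(\boldsymbol{\gamma}_{\boldsymbol{\mathcal{M}}}(0))e^{-2\lambda t}.
\]
Bound the truncated norm by $\overline{w}$. For any $\epsilon>0$ there is a $T_\epsilon$ such that the transient term is below $\epsilon$ for all $t\geq T_\epsilon$. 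Hence $\|\boldsymbol{z}-\boldsymbol{z}^*\|\leq\sqrt{\alpha^2\overline{w}^2+\epsilon}$, and $\boldsymbol{z}$ converges to $\Omega(\boldsymbol{z}^*)$.

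For forward invariance I would use the energy bound \eqref{eqn: energy_geodesic_bound} directly. If the energy satisfies $\mathcal{E}\leq\frac{\mu}{2\lambda}\overline{w}^2$ at some time, then \eqref{eqn: energy_geodesic_inequality} keeps it below that level for all later times. Substituting into \eqref{eqn: energy_output_bound} then gives $\|\boldsymbol{z}-\boldsymbol{z}^*\|^2\leq\alpha\mu\overline{w}^2+\alpha(\alpha-\mu)\overline{w}^2=\alpha^2\overline{w}^2$ for all later times.

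\textbf{Manifold output.} Here only the output step changes.

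1. Write $\boldsymbol{\zeta}_s=\boldsymbol{S}_z\boldsymbol{\vartheta}_z$. Then $\boldsymbol{\vartheta}_z=\boldsymbol{P}_{S_z}^\top\boldsymbol{\zeta}_s=\boldsymbol{P}_{S_z}^\top\boldsymbol{\mathcal{C}}\boldsymbol{\vartheta}$, which uses $\boldsymbol{P}_{S_z}^\top\boldsymbol{S}_z=\boldsymbol{I}$ and requires $\boldsymbol{\zeta}(t,\cdot)$ to be a curve in $\mathcal{Z}$.
2. Sandwich $\boldsymbol{R}_2'$ in \eqref{eqn: LMI_R2_prime} with $[\boldsymbol{\vartheta}^\top\ \boldsymbol{\omega}_s^\top]$. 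This gives
\[
\alpha^{-1}\boldsymbol{\vartheta}_z^\top\boldsymbol{Q}\boldsymbol{\vartheta}_z\leq 2\lambda\boldsymbol{\vartheta}^\top\boldsymbol{\mathcal{M}}\boldsymbol{\vartheta}+(\alpha-\mu)\|\boldsymbol{\omega}_s\|^2,
\]
the metric-weighted analogue of \eqref{eqn: R2_ineq_diff}.
3. Integrate over $s\in[0,1]$ along $\boldsymbol{c}=\boldsymbol{\gamma}_{\boldsymbol{\mathcal{M}}}$.
4. Replace the Euclidean Cauchy--Schwarz step with the Riemannian one. Since $\boldsymbol{\zeta}(t,\cdot)$ joins $\boldsymbol{z}$ to $\boldsymbol{z}^*$,
\[
d_g(\boldsymbol{z},\boldsymbol{z}^*)^2\leq\Big(\int_0^1\sqrt{\boldsymbol{\vartheta}_z^\top\boldsymbol{Q}\boldsymbol{\vartheta}_z}\,ds\Big)^2\leq\int_0^1\boldsymbol{\vartheta}_z^\top\boldsymbol{Q}\boldsymbol{\vartheta}_z\,ds.
\]
The first inequality holds because the Riemannian distance is the infimum of lengths; the second is Cauchy--Schwarz on $[0,1]$.
5. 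Combine with \eqref{eqn: energy_geodesic_bound}. This yields $d_g(\boldsymbol{z},\boldsymbol{z}^*)^2\leq\alpha^2\overline{w}^2+\beta$, i.e.\ the universal $\mathcal{L}_\infty$ gain $\alpha$ for the Riemannian distance.
6. The tube $\Omega_{\mathcal{Z}}$ then follows by the same transient and invariance argument as in the Euclidean case.

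\textbf{Main obstacle.} The delicate step is item 1: $\boldsymbol{\zeta}(t,s)=\boldsymbol{g}(\boldsymbol{c},\boldsymbol{\nu})$ must lie in $\mathcal{Z}$ for every $s$, so that $\boldsymbol{\zeta}_s$ is a genuine tangent vector and $\boldsymbol{P}_{S_z}^\top$ recovers its intrinsic coordinates exactly. For $\mathcal{Z}\subseteq\mathcal{X}\times\mathbb{R}^m$ I would argue this holds because $\boldsymbol{g}$ maps into $\mathcal{Z}$ by assumption and $\boldsymbol{c}(t,s)\in\mathcal{X}$.

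The uniform boundedness of $\boldsymbol{S}_z^\top\boldsymbol{S}_z$ is needed so that $\boldsymbol{P}_{S_z}$ is well defined and smooth. It follows from $\boldsymbol{S}_z=\operatorname{diag}(\boldsymbol{S},\boldsymbol{I}_m)$ restricted to $\mathcal{Z}$, together with Assumption \ref{assum: sys_lie}.
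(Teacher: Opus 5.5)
Your proposal is correct and follows the paper's proof: you sandwich $\boldsymbol{R}_2'$ with $[\boldsymbol{\vartheta}^\top\ \boldsymbol{\omega}_s^\top]$ using $\boldsymbol{\vartheta}_z=\boldsymbol{P}_{S_z}^\top\boldsymbol{\mathcal{C}}\boldsymbol{\vartheta}$, integrate along $\boldsymbol{\gamma}_{\boldsymbol{\mathcal{M}}}$, bound $d_g^2(\boldsymbol{z},\boldsymbol{z}^*)$ by the $g$-energy of the output path $\boldsymbol{\zeta}(t,\cdot)$, and combine with \eqref{eqn: energy_geodesic_bound}. Two of your steps are cleaner than the paper's: the distance step (distance as an infimum of lengths, then Cauchy--Schwarz) is the right justification, since the paper's appeal to Hopf--Rinow is not needed for that inequality, and your sublevel-set argument $\mathcal{E}\leq\frac{\mu}{2\lambda}\overline{w}^2$ (which uses $\alpha\geq\mu$ from the lower block of $\boldsymbol{R}_2$) makes explicit the forward-invariance claim that the paper only asserts.
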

\begin{proof}   
    Revisiting the proof of Proposition \ref{prop: L-infty}, we now have $\boldsymbol{\zeta}_s(t,s)=\boldsymbol{S}_z(\boldsymbol{\zeta}(t,s))\boldsymbol{\vartheta}_z$ for the derivative of the parametrized output in (\ref{eqn: para_paths_diff}). Following similar procedures of left- and right-multiplication of vector $[\boldsymbol{\vartheta}^\top \boldsymbol{\omega}_s^\top]$ and its transpose respectively to the LMI (\ref{eqn: LMI_R2_prime}), a generalized version of inequality (\ref{eqn: R2_ineq_diff}) can be obtained as
    \begin{equation}\label{eqn: R2_ineq_diff_new}
        \frac{1}{\alpha}\mathcal{E}_{\mathcal{Z}}(\boldsymbol{\zeta}(t))\leq 2\lambda \mathcal{E}(\boldsymbol{c}(t))+(\alpha-\mu)\|\boldsymbol{w}(t)-\boldsymbol{w}^*(t)\|^2,
    \end{equation}
    where $\mathcal{E}_{\mathcal{Z}}(\boldsymbol{\zeta}(t)) = \int^1_0g(\boldsymbol{\vartheta}_z(t,s),\boldsymbol{\vartheta}_z(t,s))ds$. Defining the set $\Gamma_{\mathcal{Z}}(\boldsymbol{z},\boldsymbol{z}^*) = \{\boldsymbol{\zeta}(s)=\boldsymbol{g}\big(\boldsymbol{c}(s),\boldsymbol{k}(\boldsymbol{c}(s),\boldsymbol{x}^*+\boldsymbol{u}^*\big) \mid \boldsymbol{c} \in \Gamma_{\mathcal{X}}(\boldsymbol{x},\boldsymbol{x}^*)\}$, we have, by Hopf-Rinow theorem, for all $\boldsymbol{\zeta}(t)\in\Gamma_{\mathcal{Z}}(\boldsymbol{z}(t),\boldsymbol{z}^*(t))$
    \begin{equation}
        d_g^2(\boldsymbol{z}(t),\boldsymbol{z}^*(t))\leq \mathcal{E}_{\mathcal{Z}}(\boldsymbol{\zeta}(t)) 
    \end{equation}
    where $d_g(\boldsymbol{z}(t),\boldsymbol{z}^*(t))$ is the Riemannian distance between $\boldsymbol{z},\boldsymbol{z}^*$ under metric $g$. We therefore have 
    \begin{equation}\label{eqn: R2_ineq_new}
        \frac{1}{\alpha}d_g^2(\boldsymbol{z}(t),\boldsymbol{z}^*(t)) \leq 2\lambda \mathcal{E}(\boldsymbol{c}(t))+(\alpha-\mu)\|\boldsymbol{w}(t)-\boldsymbol{w}^*(t)\|^2.
    \end{equation}
    It then follows easily that
    \begin{equation}
        d_g^2(\boldsymbol{z},\boldsymbol{z}^*)_{\mathcal{L}_{\infty}^{[0,T]}}\leq \alpha^2 \|\boldsymbol{w}-\boldsymbol{w}^*\|^2_{\mathcal{L}_{\infty}^{[0,T]}} + \beta(\boldsymbol{x}(0), \boldsymbol{x}^*(0)).
    \end{equation}
    We can therefore characterize a tube-like bound on the output manifold similar to (\ref{eqn: tube_euclidean}) such that
    \begin{equation}
        \boldsymbol{z}\in\Omega_{\mathcal{Z}}(\boldsymbol{z}^*)=\{\boldsymbol{y}\in \mathcal{Z}\mid d_g(\boldsymbol{z},\boldsymbol{z}^*)\leq\alpha\overline{w}\}.
    \end{equation}
\end{proof}

\subsection{Conditions for the existence of RCCM controllers}

From \eqref{eqn: LMI_R1}, $\dot{\boldsymbol{\mathcal M}} + \big\langle \boldsymbol{\mathcal M}\boldsymbol{\mathcal{A}} \big\rangle + 2\lambda \boldsymbol{\mathcal M}$ depends on $\boldsymbol{u}$ and $\boldsymbol{w}$, which are not known prior to implementation. Adopting from \cite{Manchester2017}, extra conditions are proposed to remove these dependencies. Through musical isomorphism $\boldsymbol{\mathcal{W}}=\boldsymbol{\mathcal{M}}^{-1}$ and coordinate change $\boldsymbol{\eta}=\boldsymbol{\mathcal{M}}\boldsymbol{v}\in\mathbb{R}^q$, a dual contraction condition that preserves contraction guarantees is obtained. For all $\boldsymbol{\eta}\neq\boldsymbol{0}$, $\boldsymbol{\eta}^\top\boldsymbol{E}=\boldsymbol{0}$, we have $\boldsymbol{\eta}^\top(-\dot{\boldsymbol{\mathcal{W}}}+\big\langle (\dot{\boldsymbol{P}}_S^\top+\boldsymbol{P}_S^\top\boldsymbol{A})\boldsymbol{S}\boldsymbol{\mathcal W}\big\rangle + 2\lambda \boldsymbol{\mathcal W})\boldsymbol{\eta} < 0$, which can be rewritten as
\begin{equation}\label{eqn: dual_contraction_condition}
    \boldsymbol{E}_\perp^\top(-\dot{\boldsymbol{\mathcal{W}}}+\big\langle (\dot{\boldsymbol{P}}_S^\top+\boldsymbol{P}_S^\top\boldsymbol{A})\boldsymbol{S}\boldsymbol{\mathcal W}\big\rangle + 2\lambda \boldsymbol{\mathcal W})\boldsymbol{E}_\perp\prec \boldsymbol{0},
\end{equation}
where $\boldsymbol{E}_\perp$ is a full rank matrix function such that $\boldsymbol{E}_\perp^\top\boldsymbol{E}=\boldsymbol{0}$. With $\boldsymbol{S}_{\boldsymbol{f}}=(\partial_{\boldsymbol{f}}\boldsymbol{P}_S^\top+\boldsymbol{P}_S^\top\frac{\partial \boldsymbol{f}}{\partial\boldsymbol{x}})\boldsymbol{S}$, $\boldsymbol{S}_{\boldsymbol{b}_i}=(\partial_{\boldsymbol{b}_i}\boldsymbol{P}_S^\top+\boldsymbol{P}_S^\top\frac{\partial \boldsymbol{b}_i}{\partial\boldsymbol{x}})\boldsymbol{S}$ and $\boldsymbol{S}_{\boldsymbol{b}_{w,j}}=(\partial_{\boldsymbol{b}_{w,j}}\boldsymbol{P}_S^\top+\boldsymbol{P}_S^\top\frac{\partial \boldsymbol{b}_{w,j}}{\partial\boldsymbol{x}})\boldsymbol{S}$, the following conditions are proposed. For all $\boldsymbol{x}\in\mathcal{X}$, $i\in \mathbb Z_m$ and $j \in \mathbb Z_p$,
    \begin{equation}\label{eqn: LMI_C1}
    \boldsymbol{C}_1=\boldsymbol{E}_{\perp}^\top\left( -\partial_{\boldsymbol{f}}\boldsymbol{\mathcal W} + \langle\boldsymbol{S}_{\boldsymbol{f}}\boldsymbol{\mathcal W}\rangle + 2\lambda\boldsymbol{\mathcal W}\right)\boldsymbol{E}_{\perp} \prec \boldsymbol{0},
    \end{equation}
    \begin{equation}\label{eqn: cond_C2}
    \boldsymbol{C}_{2,i}=\boldsymbol{E}_{\perp}^\top\big( -\partial_{\boldsymbol{b}_i}\boldsymbol{\mathcal W} + \langle\boldsymbol{S}_{\boldsymbol{b}_i}\boldsymbol{\mathcal W}\rangle \big)\boldsymbol{E}_{\perp} = \boldsymbol{0},
    \end{equation}
    \begin{equation}\label{eqn: cond_C2w}
    \boldsymbol{C}_{3,j}=\boldsymbol{E}_{\perp}^\top\big( -\partial_{\boldsymbol{b}_{w,j}}\boldsymbol{\mathcal W} + \langle\boldsymbol{S}_{\boldsymbol{b}_{w,j}}\boldsymbol{\mathcal W}\rangle \big)\boldsymbol{E}_{\perp} = \boldsymbol{0}.
    \end{equation}
\begin{proposition}
   Conditions \eqref{eqn: cond_C2} and \eqref{eqn: cond_C2w} remove the dependence on $\boldsymbol{u}$ and $\boldsymbol{w}$, respectively, in $\boldsymbol{R}_1$ of \eqref{eqn: LMI_R1}. Together with the LMI in \eqref{eqn: LMI_C1}, these conditions guarantee the existence of a feedback controller of the form \eqref{eqn: fb_ctrl_format} with differential control law $\delta\boldsymbol{u}=\boldsymbol{K}\delta\boldsymbol{x}$, and some appropriately chosen constants $\alpha,\mu$, such that \eqref{eqn: LMI_R1} and \eqref{eqn: LMI_R2} hold.   
\end{proposition}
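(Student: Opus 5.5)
Our plan is to work in the dual variables $\boldsymbol{\mathcal W}=\boldsymbol{\mathcal M}^{-1}$ and to treat the $\boldsymbol{R}_1$ condition the way \cite{Manchester2017} treats the CCM condition. We then reduce $\boldsymbol{R}_1$ to a condition on the orthogonal complement of $\boldsymbol{E}$ via Finsler's lemma, and choose $\mu$ and $\alpha$ last.

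\emph{Step 1: removing the dependence on $\boldsymbol{u}$ and $\boldsymbol{w}$.} The state derivative is $\dot{\boldsymbol{x}}=\boldsymbol{f}+\sum_i\boldsymbol{b}_iu_i+\sum_j\boldsymbol{b}_{w,j}w_j$, and both the directional derivative $\partial_{\dot{\boldsymbol{x}}}$ and $\boldsymbol{A}$ are affine in $(\boldsymbol{u},\boldsymbol{w})$. Hence
\[
-\dot{\boldsymbol{\mathcal W}}+\big\langle(\dot{\boldsymbol{P}}_S^\top+\boldsymbol{P}_S^\top\boldsymbol{A})\boldsymbol{S}\boldsymbol{\mathcal W}\big\rangle
\]
splits exactly into the $\boldsymbol{f}$-term plus $\sum_i u_i(\cdot)_{\boldsymbol{b}_i}$ plus $\sum_j w_j(\cdot)_{\boldsymbol{b}_{w,j}}$, with the brackets built from $\boldsymbol{S}_{\boldsymbol{f}}$, $\boldsymbol{S}_{\boldsymbol{b}_i}$ and $\boldsymbol{S}_{\boldsymbol{b}_{w,j}}$. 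Projecting onto $\boldsymbol{E}_\perp$, conditions \eqref{eqn: cond_C2} and \eqref{eqn: cond_C2w} make the $u_i$ and $w_j$ coefficients vanish. The projected dual condition \eqref{eqn: dual_contraction_condition} therefore reduces to $\boldsymbol{C}_1\prec\boldsymbol{0}$ for every $\boldsymbol{u},\boldsymbol{w}$.

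\emph{Step 2: existence of $\boldsymbol{K}$.} The $(1,1)$ block of $\boldsymbol{R}_1$ is congruent, via $\boldsymbol{\mathcal W}$, to
\[
\boldsymbol{\Phi}+\boldsymbol{E}\boldsymbol{Y}+\boldsymbol{Y}^\top\boldsymbol{E}^\top,
\]
where $\boldsymbol{\Phi}$ is the bracket in \eqref{eqn: dual_contraction_condition} and $\boldsymbol{Y}=\boldsymbol{K}\boldsymbol{S}\boldsymbol{\mathcal W}$. This uses $\boldsymbol{\mathcal W}\dot{\boldsymbol{\mathcal M}}\boldsymbol{\mathcal W}=-\dot{\boldsymbol{\mathcal W}}$. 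Step 1 gives $\boldsymbol{E}_\perp^\top\boldsymbol{\Phi}\boldsymbol{E}_\perp\prec\boldsymbol{0}$, so Finsler's lemma yields $\rho(\boldsymbol{x})>0$ with $\boldsymbol{\Phi}-\rho\boldsymbol{E}\boldsymbol{E}^\top\prec\boldsymbol{0}$. We would take $\boldsymbol{Y}=-\tfrac{\rho}{2}\boldsymbol{E}^\top$, which corresponds to the differential gain $\delta\boldsymbol{u}=-\tfrac{\rho}{2}\boldsymbol{E}^\top\boldsymbol{\mathcal M}\boldsymbol{P}_S^\top\delta\boldsymbol{x}$. Since $\boldsymbol{P}_S^\top\boldsymbol{S}=\boldsymbol{I}_q$, this is consistent with $\boldsymbol{K}\boldsymbol{S}$.

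To obtain a genuine controller $\boldsymbol{k}$, we would integrate this differential law along the geodesic $\boldsymbol{\gamma}_{\boldsymbol{\mathcal M}}$ from $\boldsymbol{x}^*$ to $\boldsymbol{x}$, as in \cite{Manchester2017}. This gives $\boldsymbol{k}(\boldsymbol{x},\boldsymbol{x})=\boldsymbol{0}$. We expect this step to be the main obstacle. The Finsler step gives strict negativity pointwise, but we need a uniform margin, so that $\boldsymbol{\Phi}+\langle\boldsymbol{E}\boldsymbol{Y}\rangle\preceq-\epsilon\boldsymbol{I}_q$ with a fixed $\epsilon>0$. For this we would invoke uniform boundedness of $\boldsymbol{\mathcal M}$ and $\boldsymbol{S}^\top\boldsymbol{S}$, and we would also need $\boldsymbol{E}_w$ to be bounded, which would be an additional assumption. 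Failing that, we would shrink $\lambda$ slightly so that strictness with margin is inherited from $\boldsymbol{C}_1\prec\boldsymbol{0}$ at the original $\lambda$.

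\emph{Step 3: choosing $\mu$ and $\alpha$.} With the $(1,1)$ block of $\boldsymbol{R}_1$ bounded above by $-\epsilon'\boldsymbol{I}_q$, a Schur complement shows that $\boldsymbol{R}_1\preceq\boldsymbol{0}$ whenever
\[
\mu\ge\sup_{\boldsymbol{x}}\|\boldsymbol{\mathcal M}\boldsymbol{E}_w\|^2/\epsilon'.
\]
This supremum is finite by the uniform bounds. For $\boldsymbol{R}_2$, we need $\alpha\ge\mu$ and
\[
2\lambda\boldsymbol{\mathcal M}\succeq\alpha^{-1}\boldsymbol{\mathcal C}^\top\boldsymbol{\mathcal C}.
\]
Because $\boldsymbol{\mathcal M}\succeq\underline{\mathfrak m}\boldsymbol{I}_q$, the second inequality holds once $\alpha\ge\sup\|\boldsymbol{\mathcal C}\|^2/(2\lambda\underline{\mathfrak m})$. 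This requires $\boldsymbol{C}$, $\boldsymbol{D}$ and the constructed $\boldsymbol{K}$ to be bounded, which we would state as a standing boundedness hypothesis. Taking $\alpha$ to be the maximum of the two lower bounds completes the argument.
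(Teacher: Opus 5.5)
Your outline follows the paper's route, made explicit. You split the dual bracket affinely so that $\boldsymbol C_{2,i}=\boldsymbol 0$ and $\boldsymbol C_{3,j}=\boldsymbol 0$ remove $\boldsymbol u,\boldsymbol w$ from the $\boldsymbol E_\perp$-projection. You obtain $\boldsymbol K$ from the Manchester--Slotine argument, which you correctly unpack as the congruence with $\boldsymbol Y=\boldsymbol K\boldsymbol S\boldsymbol{\mathcal W}$ plus Finsler's lemma. You then path-integrate, choose $\mu$ by a Schur complement, and take $\alpha$ large. Your Step 3 is sharper than the paper's ``line search'' and ``sufficiently large $\alpha$''. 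You note that a uniform margin, $\alpha\ge\mu$, and a bound on $\boldsymbol{\mathcal C}$ (hence on $\boldsymbol C,\boldsymbol D,\boldsymbol K$) are needed. Shrinking $\lambda$ to $\lambda'<\lambda$ does give the primal margin $2(\lambda-\lambda')\underline{\mathfrak m}$.

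The inference that fails is ``Step 1 gives $\boldsymbol E_\perp^\top\boldsymbol\Phi\boldsymbol E_\perp\prec\boldsymbol 0$, so Finsler's lemma yields $\rho(\boldsymbol x)$.'' Your $\boldsymbol\Phi$ is the full bracket,
$\boldsymbol\Phi=\boldsymbol\Phi_0(\boldsymbol x)+\sum_iu_i\boldsymbol N_i(\boldsymbol x)+\sum_jw_j\boldsymbol N^w_j(\boldsymbol x)$,
where $\boldsymbol N_i=-\partial_{\boldsymbol b_i}\boldsymbol{\mathcal W}+\langle\boldsymbol S_{\boldsymbol b_i}\boldsymbol{\mathcal W}\rangle$ and $\boldsymbol N^w_j$ is defined analogously. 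The Killing conditions only give $\boldsymbol E_\perp^\top\boldsymbol N_i\boldsymbol E_\perp=\boldsymbol 0$ and $\boldsymbol E_\perp^\top\boldsymbol N^w_j\boldsymbol E_\perp=\boldsymbol 0$. The blocks $\boldsymbol E^\top\boldsymbol N\boldsymbol E$ and $\boldsymbol E^\top\boldsymbol N\boldsymbol E_\perp$ survive. Finsler therefore gives only $\rho(\boldsymbol x,\boldsymbol u,\boldsymbol w)$, and it must grow quadratically in $\|\boldsymbol u\|$ and $\|\boldsymbol w\|$ through the off-diagonal block $\boldsymbol E^\top\boldsymbol\Phi\boldsymbol E_\perp$.

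For $\boldsymbol w$, no implementable gain can fix this. As long as $\boldsymbol K$ does not depend on the unknown $\boldsymbol w$, the $w_j$-coefficient of the $(1,1)$ block of $\boldsymbol R_1$ is $\boldsymbol{\mathcal M}\boldsymbol N^w_j\boldsymbol{\mathcal M}$, whatever $\boldsymbol K$ is. An affine function of $\boldsymbol w$ stays $\preceq\boldsymbol 0$ on all of $\mathbb R^p$ only if that coefficient vanishes. That is the unprojected Killing condition $\boldsymbol N^w_j=\boldsymbol 0$, which is strictly stronger than $\boldsymbol C_{3,j}=\boldsymbol 0$.

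To close Step 2 you need one of two fixes:
\begin{itemize}
\item Strengthen both Killing conditions to their unprojected form. Then $\boldsymbol\Phi=\boldsymbol\Phi_0(\boldsymbol x)$ and your $\rho(\boldsymbol x)$ is legitimate.
\item Restrict $\boldsymbol x,\boldsymbol u,\boldsymbol w$ explicitly to compact sets, as the paper does for training, and choose $\rho$ to dominate the $(\boldsymbol u,\boldsymbol w)$-dependent coupling uniformly. By continuity this also settles your uniformity concerns about the margin, $\sup\|\boldsymbol{\mathcal M}\boldsymbol E_w\|$ and $\sup\|\boldsymbol{\mathcal C}\|$.
\end{itemize}
The paper's own proof does not confront this point either. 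It folds the step into the citation of Proposition 2 of Manchester and Slotine, so you cannot lean on that citation to close it.
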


\begin{proof}
    The dual contraction condition in \eqref{eqn: dual_contraction_condition} can be decomposed into
    \begin{equation}
        \boldsymbol{C}_1 + \sum_{i=1}^m\boldsymbol{C}_{2,i}\cdot u_i + \sum_{j=1}^p\boldsymbol{C}_{3,j}\cdot w_j \prec \boldsymbol{0},
    \end{equation}
    such that conditions \eqref{eqn: LMI_C1}, \eqref{eqn: cond_C2} and \eqref{eqn: cond_C2w} guarantees the LMI in \eqref{eqn: dual_contraction_condition} holds. Note that these three conditions are independent of $\boldsymbol{u}$ and $\boldsymbol{w}$. Formally, \eqref{eqn: LMI_C1} makes sure the uncontrolled system is contracting in directions orthogonal to the span of the control inputs, while \eqref{eqn: cond_C2} and \eqref{eqn: cond_C2w} enforce $\boldsymbol{b}_i$ and $\boldsymbol{b}_{w,j}$ to be Killing vector fields. As shown in Proposition 2 of \cite{Manchester2017}, satisfying the above conditions is equivalent to the existence of a differential feedback controller in the form $\delta\boldsymbol{u}=\boldsymbol{K}(\boldsymbol{x})\delta\boldsymbol{x}$ that satisfies $\dot{\boldsymbol{\mathcal M}} + \big\langle \boldsymbol{\mathcal M}\boldsymbol{\mathcal{A}} \big\rangle + 2\lambda \boldsymbol{\mathcal M}\prec \boldsymbol{0}$. Consequently, $\delta\boldsymbol{u}$ is path integrable and the control can be found in the general form \eqref{eqn: fb_ctrl_format}. Finally, if the upper-left block of $\boldsymbol{R}_1$ is negative definite, an appropriate choice of $\mu$ (e.g., via line search) ensures that $\boldsymbol{R}_1\preceq\boldsymbol{0}$. As $\boldsymbol{\mathcal{M}}$ is uniformly bounded, $\boldsymbol{R}_2\succeq\boldsymbol{0}$ is satisfied with a sufficiently large $\alpha$.
\end{proof}

\subsection{Offline learning of RCCMs and feedback controller}

To implement the theoretical results above, the feedback controller \eqref{eqn: fb_ctrl_format} and the dual metric $\boldsymbol{\mathcal{W}}$ are formulated as neural networks and learned simultaneously. The state, control and disturbance variables are limited to their compact subspaces to make learning feasible. The neural feedback controller is modelled as $\boldsymbol{u}=\boldsymbol{k}_{nn}(\boldsymbol{x},\boldsymbol{x}^*)+\boldsymbol{u}^*$, where $\boldsymbol{k}_{nn}(\boldsymbol{x},\boldsymbol{x}^*)=\boldsymbol{K}_1(\boldsymbol{x},\boldsymbol{x}^*;\theta_{k_1})\cdot \tanh{\big(\boldsymbol{K}_2(\boldsymbol{x},\boldsymbol{x}^*;\theta_{k_2})\cdot\boldsymbol{\varepsilon}(\boldsymbol{x},\boldsymbol{x}^*)}\big)$. $\theta_{k_1}$, $\theta_{k_2}$ are neural network parameters for $\boldsymbol{K}_1\in\mathbb{R}^{m\times3n}$ and $\boldsymbol{K}_2\in\mathbb{R}^{3n\times q}$, and $\tanh{(\cdot)}$ is the hyperbolic tangent function. $\boldsymbol{\varepsilon}(\cdot,\cdot):\mathcal{X}\times\mathcal{X}\to\mathbb R^q$ is an error function for Lie groups represented in Euclidean space with $\boldsymbol{\varepsilon}(\boldsymbol{x},\boldsymbol{x})=\boldsymbol{0}$ $\forall \boldsymbol{x}\in\mathcal{X}$. This can be designed specifically for each Lie group, and further details are provided in Section \ref{sec: application}. This neural controller follows the structure of \eqref{eqn: fb_ctrl_format} by design, where $\boldsymbol{k}_{nn}(\boldsymbol{x},\boldsymbol{x})=\boldsymbol{0}\,\,\,\forall\boldsymbol{x}\in\mathcal{X}$. Moreover, the dual metric is modelled as $\boldsymbol{\mathcal{W}}(\boldsymbol{x})=\boldsymbol{\Theta}(\boldsymbol{x};\theta_{w})^\top\boldsymbol{\Theta}(\boldsymbol{x};\theta_{w})+\overline{\mathfrak{m}}^{-1}\boldsymbol{I}_q$, such that $\boldsymbol{\mathcal{W}}$ is lower-bounded by $\overline{\mathfrak{m}}^{-1}$. $\alpha$ and $\mu$ in \eqref{eqn: LMI_R1} and \eqref{eqn: LMI_R2} are parametrized by learnable variables $\theta_\alpha,\theta_\mu$, respectively, and trained jointly. Specifically, $\alpha=\ln(1+e^{\theta_\alpha})$ and $\mu=\ln(1+e^{\theta_\mu})$, which ensures $\alpha,\mu>0$. 

Let $\{(\boldsymbol{x}_k,\boldsymbol{x}^*_k,\boldsymbol{u}^*_k,\boldsymbol{w}_k)\}^N_{k=1}$ be $N$ training samples drawn uniformly from their respective compact subsets. The network parameters $\theta_{k_1}$, $\theta_{k_2}$, $\theta_{w}$, $\theta_\alpha$, $\theta_\mu$ are trained by minimizing the empirical loss $L=\frac{1}{N}\sum^N_{k=1}L_k$, where
\begin{equation}\label{eqn: training_loss_k}
\begin{aligned}
    &L_k = \operatorname{L_{PD}}(-\boldsymbol{R}_1) + \operatorname{L_{PD}}(\boldsymbol{R}_2) + 
    \operatorname{L_{PD}}(-\boldsymbol{C}_1) + \hspace{-0.08cm}\sum_{i=1}^m\|\boldsymbol{C}_{2,i}\|_F \\
    &+\sum_{j=1}^p\|\boldsymbol{C}_{3,j}\|_F + \operatorname{L_{PD}}(\underline{\mathfrak{m}}^{-1}\boldsymbol{I}_q - \boldsymbol{\mathcal{W}}) + \operatorname{ReLU}(\alpha-\underline{\alpha})
\end{aligned}
\end{equation}
is the loss for each training sample. Note that $\|\cdot\|_F$ is the Frobenius norm and $\operatorname{ReLU}(x)=\max(0,x)$. Given a matrix $\boldsymbol{A}\in \mathbb R^{n\times n}$, $\operatorname{L_{PD}}(\boldsymbol{A})\geq 0$ penalizes negative definiteness of $\boldsymbol{A}$. We uniformly sample $K$ unit vectors $\{\boldsymbol{p}_i\in\mathbb{R}^n \mid \|\boldsymbol{p}_i\|=1 \}^K_{i=1}$ and $\operatorname{L_{PD}}(\boldsymbol{A})=\frac{1}{K}\sum^K_{i=1}\max(0,-\boldsymbol{p}_i^\top \boldsymbol{A}\boldsymbol{p}_i)$. Loss terms based on LMIs \eqref{eqn: LMI_R1}, \eqref{eqn: LMI_R2} are added together as soft training loss. Auxiliary loss terms based on conditions \eqref{eqn: LMI_C1}, \eqref{eqn: cond_C2} and \eqref{eqn: cond_C2w} are introduced to ``guide'' the training. Consistent with the findings in \cite{Sun2021}, it becomes more challenging for the neural networks to converge without them. Moreover, $\operatorname{L_{PD}}(\underline{\mathfrak{m}}^{-1}\boldsymbol{I}_q - \boldsymbol{\mathcal{W}})$ penalizes the learned dual metric if it exceeds the upper bound $\underline{\mathfrak{m}}^{-1}$. This guarantees uniform boundedness of the RCCM as in \eqref{eqn: cond_metric_uni_bound}. Lastly, $\operatorname{ReLU}(\alpha-\underline{\alpha})$ is incorporated to the loss to minimize the $\mathcal{L}_\infty$ gain bound.

Choosing hyperparameters $(\underline{\mathfrak{m}}, \overline{\mathfrak{m}}, \underline{\alpha}, \lambda)$, the functions $\boldsymbol{\mathcal{W}}$ and $\boldsymbol{k}_{nn}$ are trained using gradient descent. Although the neural networks are trained on sampled data and residual learning errors may persist, theoretical guarantees can be established locally within the sample space \cite{Tsukamoto2021}[Theorem 6].

\subsection{Online implementation of neural feedback controller}\label{subsec: online_implementation}

One of the main advantages of this framework is that online computation of the control law is lightweight, since the computationally intensive task of learning the controller function is performed offline. This avoids integrating the differential control law along geodesics in common implementations \cite{Zhao2022, Tsukamoto2020}. In the closed-loop system, the states are fed back to the controller, and the motion planner provides a nominal trajectory $\boldsymbol{x}^*(t)$ and control input $\boldsymbol{u}^*(t)$ that jointly satisfy the system in \eqref{eqn: sys_ctrl_affine_dist}. If disturbance estimates are incorporated into motion planning, then $\boldsymbol{w}-\boldsymbol{w}^*$ is given by the disturbance estimation error, potentially reducing $\overline{w}$ and thus the size of the bounding tube of output $\boldsymbol{z}$. 

\section{Case Study: A Quadrotor Example}\label{sec: application}

The system dynamics of a rate-controlled quadrotor with mass $m$ is studied in this section. The state vector is defined as $\boldsymbol{x}=[\boldsymbol{p}_I^\top,\boldsymbol{v}_I^\top,\boldsymbol{r}_{IB}^\top]^\top$, where $\boldsymbol{p}_I\in\mathbb{R}^3$ is the inertial position, $\boldsymbol{v}_I\in\mathbb{R}^3$ is the inertial velocity, and $\boldsymbol{r}_{IB}=\operatorname{vec}(\boldsymbol{R}_{IB})\in\mathbb{R}^9$ is the vectorized rotation matrix from the body-fixed frame of the quadrotor to the inertial frame. The state space is a 9-dimensional Lie group $\mathcal{X}=\mathbb R^6\times SO(3)$ embedded in $\mathbb R^{15}$, in which a smooth $\boldsymbol{S}(\boldsymbol{x})$ representing the tangent bundle can be found. For $SO(3)$, we define $\boldsymbol{S}_r(\boldsymbol{x})\in\mathbb{R}^{9\times3}$ to represent the basis of $TSO(3)$
\begin{equation}
    \boldsymbol{S}_r(\boldsymbol{x}) = \big[\operatorname{vec}(\boldsymbol{R}_{IB}\boldsymbol{e}_1^\wedge)\,\operatorname{vec}(\boldsymbol{R}_{IB}\boldsymbol{e}_2^\wedge)\, \operatorname{vec}(\boldsymbol{R}_{IB}\boldsymbol{e}_3^\wedge)\big],
\end{equation}
where $\boldsymbol{S}_r^\top\boldsymbol{S}_r=2\boldsymbol{I}_3$. Hence, $\boldsymbol{S}(\boldsymbol{x})=\operatorname{diag}\big(
    \boldsymbol{I}_6,\, \boldsymbol{S}_r(\boldsymbol{x})\big)$ with $\boldsymbol{S}^\top\boldsymbol{S}$ uniformly bounded.
The control input is composed of the mass-normalized collective thrust $f_t/m\in\mathbb{R}$ and body angular rates $\boldsymbol{\omega}_B\in\mathbb{R}^3$, such that $\boldsymbol{u}=[f_t/m\,\,\,\boldsymbol{\omega}_B^\top]^\top$. The disturbance is composed of the mass-normalized disturbance force $\boldsymbol{f}_d/m\in\mathbb{R}^3$ and angular rate disturbance $\boldsymbol{\omega}_d\in\mathbb{R}^3$ such that $\boldsymbol{w}=[\boldsymbol{f}_d^\top/m,\,\,\boldsymbol{\omega}^\top_d]^\top$. The output is defined as $\boldsymbol{z}=[\boldsymbol{Q}\boldsymbol{p}_I^\top\,\,\,\boldsymbol{R}\boldsymbol{u}^\top]^\top\in\mathbb{R}^7$, where $\boldsymbol{Q}$ and $\boldsymbol{R}$ are weighting matrices. The system's dynamical equation is therefore written in the form of \eqref{eqn: sys_ctrl_affine_dist} as 
\begin{equation}\label{eqn: sys_quad}
    \dot{\boldsymbol{x}}=\begin{bmatrix}
        \boldsymbol{v}_I \\ \boldsymbol{g}_I \\ \boldsymbol{0}
    \end{bmatrix} + \begin{bmatrix}
        \boldsymbol{0} & \boldsymbol{0} \\
        \boldsymbol{R}_{IB}\boldsymbol{e}_3 & \boldsymbol{0} \\
        \boldsymbol{0} & \boldsymbol{S}_r \\
    \end{bmatrix}
    \begin{bmatrix}
        f_t/m \\ \boldsymbol{\omega}_B
    \end{bmatrix} + 
    \begin{bmatrix}
        \boldsymbol{0} & \boldsymbol{0}\\
        \boldsymbol{I}_3 & \boldsymbol{0}\\
        \boldsymbol{0} & \boldsymbol{S}_r\\
    \end{bmatrix}
    \begin{bmatrix}
        \boldsymbol{f}_d/m \\ \boldsymbol{\omega}_d
    \end{bmatrix},
\end{equation}
where $\boldsymbol{g}_I = [0, 0, -9,81]^\top$ is the gravitational acceleration vector in the inertial frame. One can easily verify that \eqref{eqn: sys_quad} satisfies Assumption \ref{assum: sys_lie}. The quadrotor mass is set as $1kg$, and the weighting matrices are chosen as $\boldsymbol{Q}=\boldsymbol{I}_3$ and $\boldsymbol{R}=\operatorname{Diag}(0.1, 0.5, 0.5, 0.5)$. Moreover, from inspection of \eqref{eqn: E_matrix_transformation} and \eqref{eqn: sys_quad}, $\boldsymbol{E}(\boldsymbol{x})\in\mathbb{R}^{9\times 4}$, $\boldsymbol{E}_w(\boldsymbol{x})\in\mathbb{R}^{9\times 6}$ can be found as
\begin{equation}
    \boldsymbol{E}(\boldsymbol{x})=\begin{bmatrix}
        \boldsymbol{0} & \boldsymbol{0} \\
        \boldsymbol{R}_{IB}\boldsymbol{e}_3 & \boldsymbol{0} \\
        \boldsymbol{0} & \boldsymbol{I}_3 \\
    \end{bmatrix},\, \boldsymbol{E}_w(\boldsymbol{x})=\begin{bmatrix}
        \boldsymbol{0} \\
        \boldsymbol{I}_6
    \end{bmatrix}. 
\end{equation}
The analytical solution for $\boldsymbol{E}_\perp(\boldsymbol{x})\in\mathbb{R}^{9\times5}$ can be found as
\begin{equation}
    \boldsymbol{E}_\perp(\boldsymbol{x}) = \begin{bmatrix}
        \boldsymbol{I}_3 & \boldsymbol{0} & \boldsymbol{0} \\
        \boldsymbol{0} & \boldsymbol{R}_{IB}\boldsymbol{e}_1 & \boldsymbol{R}_{IB}\boldsymbol{e}_2 \\
        \boldsymbol{0} & \boldsymbol{0} & \boldsymbol{0}
    \end{bmatrix},
\end{equation}
such that $\boldsymbol{E}_\perp^\top(\boldsymbol{x})\boldsymbol{E}(\boldsymbol{x})=\boldsymbol{0}$ for all $\boldsymbol{x}\in\mathcal{X}$. Lastly, the error function $\boldsymbol{\varepsilon}(\boldsymbol{x},\boldsymbol{x}^*)$ in $\boldsymbol{k}_{nn}(\boldsymbol{x},\boldsymbol{x}^*)$ is designed by adopting the geometric error from \cite{Lee2010}. We define an error in $SO(3)$ as $\boldsymbol{\varepsilon}_r=\frac{1}{2}(\boldsymbol{R}_{IB}^{*\top}\boldsymbol{R}_{IB}-\boldsymbol{R}_{IB}^\top\boldsymbol{R}_{IB}^*)^\vee\in\mathbb{R}^{3}$. 
Combining it with the Euclidean error $\boldsymbol{\varepsilon}_p=\boldsymbol{p}_I-\boldsymbol{p}_I^*$ and $\boldsymbol{\varepsilon}_v=\boldsymbol{v}_I-\boldsymbol{v}_I^*$ gives
\begin{equation}
    \boldsymbol{\varepsilon}(\boldsymbol{x},\boldsymbol{x}^*)=\begin{bmatrix}
        \boldsymbol{\varepsilon}_p^\top & \boldsymbol{\varepsilon}_v^\top & \boldsymbol{\varepsilon}_r^\top
    \end{bmatrix}^\top.
\end{equation}
\subsection{Training and simulation results}

Each training sample $(\boldsymbol{x}_k,\boldsymbol{x}^*_k,\boldsymbol{u}^*_k,\boldsymbol{w}_k)$ is drawn from compact subsets of $\mathcal{X}\times\mathcal{X}\times\mathbb{R}^m\times\mathbb{R}^p$. For Euclidean states, control and disturbances, samples are drawn within a range. To generate sample states in $SO(3)$, Euler angles are randomly sampled within $(-1,-1,-\pi)\leq(\phi,\theta,\psi)\leq(1,1,\pi)$ and converted into rotation matrices. Moreover, hyperparameters are selected as $\underline{\mathfrak m}=0.1$, $\overline{\mathfrak m}=10$, $\lambda=0.5$ and $\underline{\alpha}=0.7$. Each neural network is composed of 2 hidden layers with 128 neurons each and a $\tanh$ activation function. Training is performed for 30 epochs with $N=131072$ training samples, using the loss function in \eqref{eqn: training_loss_k}. Our GitHub repository in \cite{Lo2026} provides further implementation details. 

The RCCM controller is trained with \textit{Adam optimizer}, and an $\mathcal{L}_\infty$ gain bound of $\alpha=1.268$ is achieved. To verify the tracking performance of the controller, a spiral reference motion $(\boldsymbol{x}^*,\boldsymbol{u}^*)$ is generated using differential flatness-based motion planning following \cite{Faessler2017}. The flat variables are chosen as $\boldsymbol{p}_I^*(t)=[0.5t,3\cos(1.5t),3\sin(1.5t)]^\top$ and $\psi^*(t)=0$, where $\psi^*$ is the desired yaw angle. Moreover, artificial disturbances are injected into the system, which are modelled as $\boldsymbol{f}_d(t)=\big(0.8+0.2\sin(0.2\pi t)\big)[0.6, 0.7, 0.3]^\top$ and $\boldsymbol{\omega}_d(t)=\big(0.5+0.5\sin(2\pi t)\big)[0.1, 0.1, 0.2]^\top$. The $\mathcal{L}_\infty$ norm $\|\boldsymbol{w}(t)\|_{\mathcal{L}_\infty}$ is calculated to be $\overline{w}=1$, for $\boldsymbol{w}^*(t)=\boldsymbol{0}$.

As a comparative study, three extra test cases are implemented. First, a CCM controller is trained with the assumption of no noise. We therefore assumed $\boldsymbol{B}_w(\boldsymbol{x})=\boldsymbol{0}$ and the neural networks are trained with the loss function
\begin{equation}\label{eqn: training_loss_CCM}
\begin{aligned}
    &L = \frac{1}{N}\sum^N_{k=1}\bigg(\operatorname{L_{PD}}(-\boldsymbol{\mathcal{CCM}}) + \operatorname{L_{PD}}(-\boldsymbol{C}_1) \\ & + \sum_{i=1}^m\|\boldsymbol{C}_{2,i}\|_F + \operatorname{L_{PD}}(\underline{\mathfrak{m}}^{-1}\boldsymbol{I}_q - \boldsymbol{\mathcal{W}})\bigg)
\end{aligned}
\end{equation}
where $\boldsymbol{\mathcal{CCM}}=\dot{\boldsymbol{\mathcal M}} + \big\langle \boldsymbol{\mathcal M}\boldsymbol{\mathcal{A}} \big\rangle + 2\lambda \boldsymbol{\mathcal M}$. 

Second, a geometric controller modified from \cite{Lee2010} is employed as shown below. Given nominal states $\boldsymbol{x}^*=[\boldsymbol{p}_I^{*\top}\,\,\boldsymbol{v}_I^{*\top}\,\,\boldsymbol{r}_{IB}^{*\top}]^\top$ and nominal control $\boldsymbol{u}^*=[f_t^*/m\,\,\,\boldsymbol{\omega}_B^{*\top}]^\top$, a desired acceleration vector is obtained as
\begin{equation}
    \boldsymbol{a}_{des}=-\boldsymbol{K}_p\boldsymbol{e}_p - \boldsymbol{K}_v\boldsymbol{e}_v + \tfrac{f_t^*}{m}\boldsymbol{R}^*_{IB}\boldsymbol{e}_3
\end{equation}
where $\boldsymbol{K}_p=\operatorname{diag}(0.5,0.5,0.5)$ and $\boldsymbol{K}_v=\boldsymbol{I}_3$ are proportional and derivative gains for translational errors. Thrust input is determined by projecting it to the third axis of the desired body orientation such that $T_{geo}/m=\boldsymbol{a}_{des}^\top\boldsymbol{R}^*_{IB}\boldsymbol{e}_3$. Desired body-frame axes are then computed, where $\boldsymbol{z}_{B,des}=\boldsymbol{a}_{des}/\|\boldsymbol{a}_{des}\|$. Given the nominal yaw angle $\psi^*$, we find the heading vector on the $xy$-plane to be $\boldsymbol{x}_C=[\cos\psi^*\,\,\sin\psi^*\,\,0]^\top$. The remaining desired body axes can be found as $\boldsymbol{y}_{B,des}=\tfrac{\boldsymbol{z}_{B,des}^\wedge\boldsymbol{x}_C}{\|\boldsymbol{z}_{B,des}^\wedge\boldsymbol{x}_C\|}$ and $\boldsymbol{x}_{B,des}=\boldsymbol{y}_{B,des}^\wedge\boldsymbol{z}_{B,des}$. The desired rotation matrix can be computed as 
\begin{equation}\label{eqn: R_IB_construction}
    \boldsymbol{R}_{IB,des} = \begin{bmatrix}
        \boldsymbol{x}_{B,des} & \boldsymbol{y}_{B,des} & \boldsymbol{z}_{B,des}.
    \end{bmatrix}
\end{equation}
We then define our body rate control input based on geometric rotation error and the feedforward body rate such that 
\begin{equation}
    \boldsymbol{\omega}_{B,geo}=\boldsymbol{R}_{IB}^\top\boldsymbol{R}^*_{IB}\boldsymbol{\omega}_B^* - \boldsymbol{K}_r\boldsymbol{e}_R,
\end{equation}
where $\boldsymbol{e}_R=\frac{1}{2}(\boldsymbol{R}_{IB,des}^\top\boldsymbol{R}_{IB} - \boldsymbol{R}_{IB}^\top\boldsymbol{R}_{IB,des})^\vee$ and $\boldsymbol{K}_R=\operatorname{diag}(2.4,2.4,2.4)$. Combining the inputs gives the control $\boldsymbol{u}=[T_{geo}/m\,\,\,\boldsymbol{\omega}_{B,geo}^{\top}]^\top$.

Third, an acceleration-based uncertainty and disturbance estimator (UDE) is implemented with the RCCM controller to reduce the disturbance deviation. The disturbance estimation law is given by 
\begin{equation}\label{eqn: translational_UDE}
    \dot{\hat{\boldsymbol{f}_d}} = -\lambda_d\big(\hat{\boldsymbol{f}_d}-(m\dot{\boldsymbol{v}_I}-m\boldsymbol{g}_I-\boldsymbol{R}_{IB}\boldsymbol{e}_3\cdot f_t)\big),
\end{equation}
where $\hat{\boldsymbol{f}_d}$ is the estimated disturbance force vector and the UDE gain is chosen to be $\lambda_d=0.5$. The disturbance estimate is incorporated into differential flatness-based motion planning with $\boldsymbol{w}^*=[\hat{\boldsymbol{f}}_d^\top/m, \boldsymbol{0}]^\top$. 

Under differential flatness, the translational dynamics with disturbance can be expressed as
\begin{equation}\label{eqn: df_translation_dist}
m\ddot{\boldsymbol{p}}_I^* = m\boldsymbol{g}_I + \boldsymbol{R}_{IB}^*\boldsymbol{e}_3 f_t^* + \boldsymbol{f}_d,
\end{equation}
which yields the desired total force
\begin{equation}\label{eqn: df_total_force}
\boldsymbol{f}^* = m(\ddot{\boldsymbol{p}}_I^* - \boldsymbol{g}I) - \hat{\boldsymbol{f}}_d.
\end{equation}
Accordingly, the desired thrust magnitude and orientation are obtained as
\begin{equation}\label{eqn: df_thrust}
f_t^* = |\boldsymbol{f}^*|, \quad
\boldsymbol{z}_B^* = \frac{\boldsymbol{f}^*}{|\boldsymbol{f}^*|}.
\end{equation}
$\boldsymbol{R}_{IB}^*$ can be obtained from $\psi^*$ similar to \eqref{eqn: R_IB_construction}. Defining the jerk of nominal motion $\boldsymbol{j}=\dot{\boldsymbol{f}^*}/m$, $\boldsymbol{x}_C=[\cos\psi^*,\sin\psi^*,0]$ and $\boldsymbol{y}_C=[-\sin\psi^*,\cos\psi^*,0]$, the desired body angular rates $\boldsymbol{\omega_B^*}=[\omega_x,\,\omega_y,\,\omega_z]^\top$ can be found as 
\begin{equation}
\begin{aligned}
    \omega_x &= -{\boldsymbol{y}_B^*}^\top\boldsymbol{j}, \\ \omega_y &= {\boldsymbol{x}_B^*}^\top\boldsymbol{j}, \\ \omega_z &= \frac{1}{\boldsymbol{y}_C \times \boldsymbol{z}_B}
\left(
\dot{\psi}\,\boldsymbol{x}_C^\top \boldsymbol{x}_B
+ \omega_y\,\boldsymbol{y}_C^\top \boldsymbol{z}_B
\right).
\end{aligned}
\end{equation}
Therefore, we can construct the nominal motion as $\boldsymbol{x}^*=[\boldsymbol{p}_I^*,\,\dot{\boldsymbol{p}}_I^*,\,\operatorname{vec}(\boldsymbol{R}^*_{IB})]^\top$ and $\boldsymbol{u}^*=[f_t^*,\,\boldsymbol{\omega}_B^*]^\top$. The implementation framework is illustrated in Figure \ref{fig: UDE}.

\begin{figure}[htbp]
    \centering
    \includegraphics[width=\linewidth]{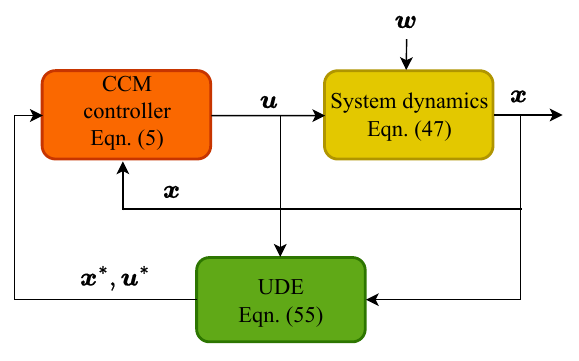}
    \caption{Closed-loop system with a CCM feedback controller and a UDE.}
    \label{fig: UDE}
\end{figure}

Numerical simulations are performed on the four test cases, all initialized from the same state with an initial error. All controllers achieve bounded tracking performance as illustrated in Fig. \ref{fig: 3D}. However, further examination of the of the output deviation norm $\|\boldsymbol{z}(t)-\boldsymbol{z}^*(t)\|$ in Fig. \ref{fig: tube} shows that both the CCM and geometric controllers exceed the prescribed bound of $\alpha \bar{w} = 1.268$, whereas the output $\boldsymbol{z}$ under the RCCM controller remains within the set defined in \eqref{eqn: tube_euclidean} throughout the simulation. 
\begin{figure}[htbp]
    \centering
    \includegraphics[width=\linewidth]{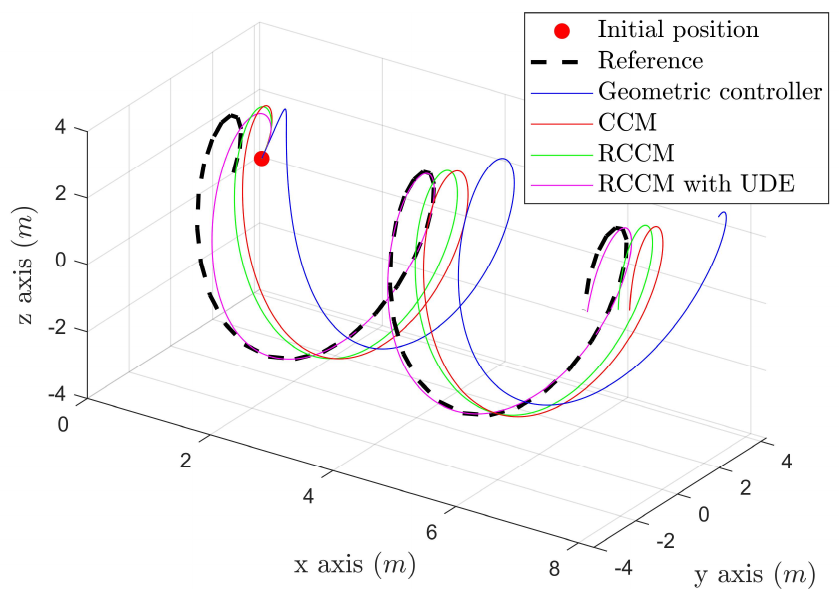}
    \caption{Quadrotor tracking a spiral trajectory under the four test cases.}
    \label{fig: 3D}
\end{figure}
\begin{figure}[htbp]
    \centering
    \includegraphics[width=\linewidth]{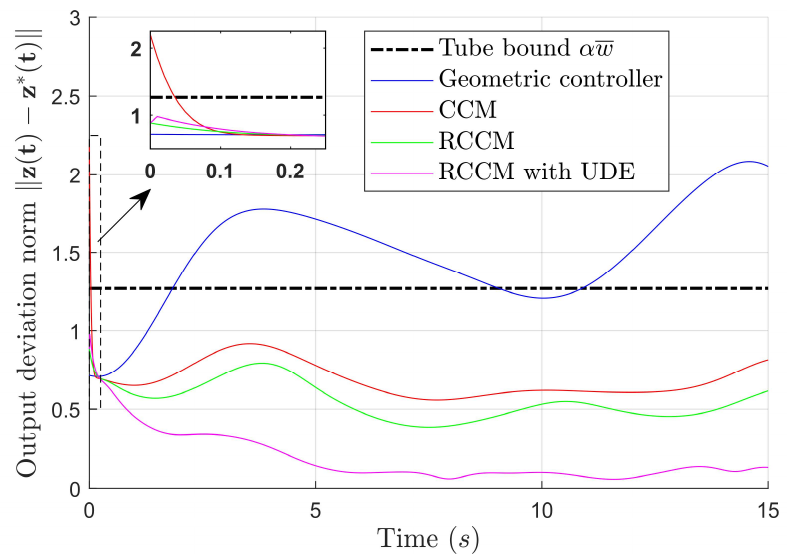}
    \caption{Output deviation norm under the four test cases.}
    \label{fig: tube}
\end{figure}

Moreover, the RCCM controller augmented with UDE exhibits the smallest output deviation, as persistent disturbances are effectively rejected after the transient phase, reducing disturbance deviation $|\boldsymbol{w}-\boldsymbol{w}^*|$. This improved disturbance attenuation directly contributes to enhanced tracking performance and system robustness compared to the RCCM controller. 

Fig.~\ref{fig: UDE_dist_est} illustrates the force disturbance estimation error throughout the simulation. During the initial $7$ seconds, the persistent disturbance is rapidly identified and compensated by the UDE, leading to a decline in estimation error as the observer converges. Beyond this period, the residual estimation error is dominated by sinusoidal components, which align with the injected time-varying disturbances. These components are not fully captured due to the finite bandwidth and inherent limitations of the observer, indicating a trade-off between disturbance rejection capability and estimator responsiveness.

\begin{figure}[htbp]
    \centering
    \includegraphics[width=\linewidth]{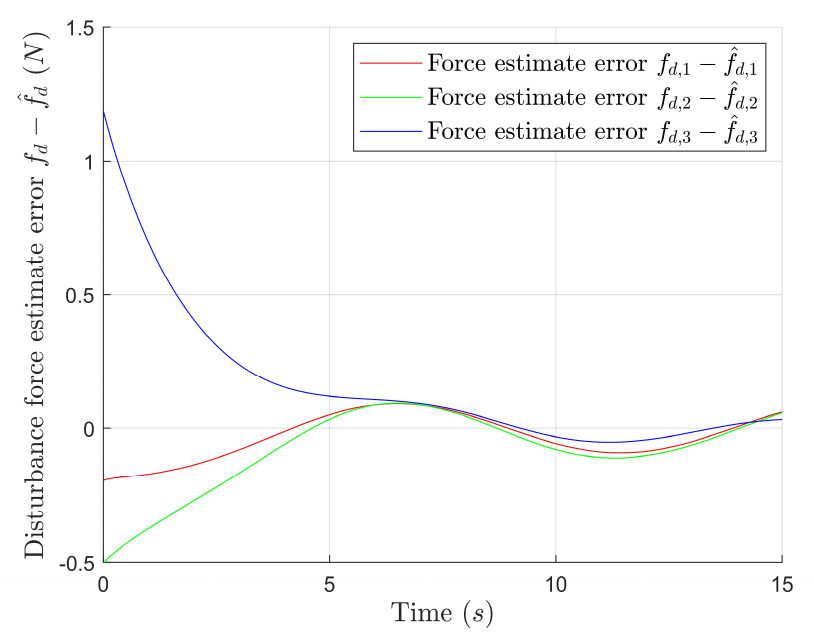}
    \caption{Disturbance estimation error of UDE.}
    \label{fig: UDE_dist_est}
\end{figure}

\section{Conclusion}
This paper extends the robust CCM approach to systems evolving on Lie groups, providing sufficient conditions for input–output stability. A learning-based framework is proposed to jointly synthesize the RCCM and the feedback controller. Simulation results on a quadrotor demonstrate superior performance in robustness under disturbances compared to geometric and CCM controllers. Future work will explore improved neural network architectures to reduce learning error and validate the proposed approach experimentally under real-world uncertainties.

\bibliography{citations}

@article{Singh2023,
author = {Singh, Sumeet and Landry, Benoit and Majumdar, Anirudha and Slotine, Jean-Jacques and Pavone, Marco},
issn = {0278-3649},
journal = {The International Journal of Robotics Research},
number = {9},
pages = {655--688},
publisher = {SAGE Publications Sage UK: London, England},
title = {{Robust feedback motion planning via contraction theory}},
volume = {42},
year = {2023}
}

@article{Faessler2017,
author = {Faessler, Matthias and Franchi, Antonio and Scaramuzza, Davide},
issn = {2377-3766},
journal = {IEEE Robotics and Automation Letters},
number = {2},
pages = {620--626},
publisher = {IEEE},
title = {{Differential flatness of quadrotor dynamics subject to rotor drag for accurate tracking of high-speed trajectories}},
volume = {3},
year = {2017}
}

@article{Brunke2022,
author = {Brunke, Lukas and Greeff, Melissa and Hall, Adam W and Yuan, Zhaocong and Zhou, Siqi and Panerati, Jacopo and Schoellig, Angela P},
issn = {2573-5144},
journal = {Annual Review of Control, Robotics, and Autonomous Systems},
number = {1},
pages = {411--444},
publisher = {Annual Reviews},
title = {{Safe learning in robotics: From learning-based control to safe reinforcement learning}},
volume = {5},
year = {2022}
}

@article{Zhao2024,
author = {Zhao, Pan and Guo, Ziyao and Cheng, Yikun and Gahlawat, Aditya and Kang, Hyungsoo and Hovakimyan, Naira},
issn = {2218-6581},
journal = {Robotics},
number = {7},
pages = {99},
publisher = {MDPI},
title = {{Guaranteed trajectory tracking under learned dynamics with contraction metrics and disturbance estimation}},
volume = {13},
year = {2024}
}

@incollection{Lee2003,
author = {Lee, John M},
booktitle = {Introduction to smooth manifolds},
pages = {1--29},
publisher = {Springer},
title = {{Smooth manifolds}},
year = {2003}
}

@inproceedings{Lee2010,
author = {Lee, Taeyoung and Leok, Melvin and McClamroch, N Harris},
booktitle = {49th IEEE conference on decision and control (CDC)},
isbn = {1424477468},
pages = {5420--5425},
publisher = {IEEE},
title = {{Geometric tracking control of a quadrotor UAV on SE (3)}},
year = {2010}
}

@inproceedings{Vang2020,
author = {Vang, Bee and Tron, Roberto},
booktitle = {2020 59th IEEE Conference on Decision and Control (CDC)},
isbn = {1728174473},
pages = {2006--2013},
publisher = {IEEE},
title = {{Global attitude control via contraction on manifolds with reference trajectory and optimization}},
year = {2020}
}

@inproceedings{Wu2024,
author = {Wu, Dongjun and Yi, Bowen and Manchester, Ian R},
booktitle = {2024 IEEE 63rd Conference on Decision and Control (CDC)},
isbn = {9798350316339},
pages = {3735--3740},
publisher = {IEEE},
title = {{Control contraction metrics on submanifolds}},
year = {2024}
}

@inproceedings{Sun2021,
author = {Sun, Dawei and Jha, Susmit and Fan, Chuchu},
booktitle = {conference on Robot Learning},
isbn = {2640-3498},
pages = {1519--1539},
publisher = {PMLR},
title = {{Learning certified control using contraction metric}},
year = {2021}
}

@inproceedings{Tsukamoto2021,
author = {Tsukamoto, Hiroyasu and Chung, Soon-Jo and Slotine, Jean-Jacques and Fan, Chuchu},
booktitle = {2021 60th IEEE Conference on Decision and Control (CDC)},
isbn = {166543659X},
pages = {2949--2954},
publisher = {IEEE},
title = {{A theoretical overview of neural contraction metrics for learning-based control with guaranteed stability}},
year = {2021}
}

@article{Manchester2017,
author = {Manchester, Ian R and Slotine, Jean-Jacques E},
issn = {0018-9286},
journal = {IEEE Transactions on Automatic Control},
number = {6},
pages = {3046--3053},
publisher = {IEEE},
title = {{Control contraction metrics: Convex and intrinsic criteria for nonlinear feedback design}},
volume = {62},
year = {2017}
}

@article{Manchester2018,
author = {Manchester, Ian R and Slotine, Jean-Jacques E},
issn = {2475-1456},
journal = {IEEE Control Systems Letters},
number = {3},
pages = {333--338},
publisher = {IEEE},
title = {{Robust Control Contraction Metrics: A Convex Approach to Nonlinear State-Feedback ${H}^\infty $ Control}},
volume = {2},
year = {2018}
}

@article{Tsukamoto2020,
author = {Tsukamoto, Hiroyasu and Chung, Soon-Jo},
issn = {2475-1456},
journal = {IEEE Control Systems Letters},
number = {1},
pages = {211--216},
publisher = {IEEE},
title = {{Neural contraction metrics for robust estimation and control: A convex optimization approach}},
volume = {5},
year = {2020}
}

@article{Zhao2022,
author = {Zhao, Pan and Lakshmanan, Arun and Ackerman, Kasey and Gahlawat, Aditya and Pavone, Marco and Hovakimyan, Naira},
issn = {2377-3766},
journal = {IEEE Robotics and Automation Letters},
number = {2},
pages = {5528--5535},
publisher = {IEEE},
title = {{Tube-certified trajectory tracking for nonlinear systems with robust control contraction metrics}},
volume = {7},
year = {2022}
}

@article{Dawson2023,
author = {Dawson, Charles and Gao, Sicun and Fan, Chuchu},
doi = {10.1109/TRO.2022.3232542},
issn = {19410468},
journal = {IEEE Transactions on Robotics},
number = {3},
pages = {1749--1767},
publisher = {IEEE},
title = {{Safe Control With Learned Certificates: A Survey of Neural Lyapunov, Barrier, and Contraction Methods for Robotics and Control}},
volume = {39},
year = {2023}
}

@misc{Lo2026,
author = {Lo, Yi Lok and Qian, Longhao and Liu, Hugh H.-T.},
title = {{Neural Robust Control on Lie Groups Using Contraction Methods}},
url = {https://github.com/loyilok515/CDC2026},
year = {2026}
}

@article{Lohmiller1998,
author = {Lohmiller, Winfried and Slotine, Jean-Jacques E},
issn = {0005-1098},
journal = {Automatica},
number = {6},
pages = {683--696},
publisher = {Elsevier},
title = {{On contraction analysis for non-linear systems}},
volume = {34},
year = {1998}
}
\end{document}